\newcommand\numberthis{\addtocounter{equation}{1}\tag{\theequation}}
\newtheoremstyle{exampstyle}
{3pt} 
{0pt} 
{} 
{} 
{\bfseries} 
{.} 
{.5em} 
{} 
\theoremstyle{exampstyle} 
\newtheorem{theorem}{Theorem}
\theoremstyle{exampstyle} 
\newtheorem{lemma}{Lemma}
\theoremstyle{exampstyle} 
\newtheorem{corollary}{Corollary}
\theoremstyle{exampstyle} 
\newtheorem{problem}{Problem}
\theoremstyle{exampstyle}
\begin{document}

\title{Integrated Access and Backhaul Optimization for Millimeter Wave Heterogeneous Networks}

\author{Yilin~Li,~\IEEEmembership{Student~Member,~IEEE,}
        Jian~Luo,
        Richard A. Stirling-Gallacher,~\IEEEmembership{Member,~IEEE,}
        and~Giuseppe~Caire,~\IEEEmembership{Fellow,~IEEE}
\thanks{Yilin Li is with the German Research Center, Huawei Technologies Duesseldorf GmbH, 80992 Munich, Germany, and with the Communications and Information Theory Group, Technische Universit{\"a}t Berlin, 10587 Berlin, Germany (e-mail: halodiplomat@gmail.com).}%
\thanks{Jian Luo and Richard A. Stirling-Gallacher are with the German Research Center, Huawei Technologies Duesseldorf GmbH, 80992 Munich, Germany (e-mail: jianluo@huawei.com; richard.sg@huawei.com).}%
\thanks{Giuseppe Caire is with the Communications and Information Theory Group, Technische Universit{\"a}t Berlin, 10587 Berlin, Germany, and with the Department of Electrical Engineering, The University of Southern California, Los Angeles, CA 90089, USA (e-mail: caire@tu-berlin.de).}%
}

\maketitle

\begin{abstract}
By allowing a large number of links to be simultaneously transmitted, directional antenna arrays with beamforming have been envisioned as a promising candidate to reach unprecedented levels of spatial isolation. To achieve the high efficiency of spatial reuse in improving system performance, an optimization problem that maximizes the achievable data rate of a multihop heterogeneous network, which incorporates the concept of integrated access and backhaul and supports both downlink and uplink transmissions on access and backhaul links, is formulated. The optimization problem is then systematically decomposed and demonstrated as NP-hard, and a heuristic joint scheduling and resource allocation algorithm is proposed to maximize the achievable data rate. In addition, an efficient dynamic routing algorithm is proposed to further enhance the data rate. With extensive system-level simulations, it is demonstrated that the proposed algorithms achieve significant gain over benchmark schemes, in terms of data rate, and closely approach the theoretical optimum, yet with lower latency. Besides, the proposed algorithms enable a flexible adjustment of downlink and uplink transmission duration allocation and support both half- and full-duplex modes with considerable performance enhancement. In particular, the proposed algorithms are capable of fulfilling different performance requirements for both point-to-point and point-to-multipoint communications.
\end{abstract}

%
%
\IEEEpeerreviewmaketitle

\section{Introduction}\label{Ch3_Sec1}
5G cellular communications are embracing mm-wave frequencies between $6$ and $\SI{100}{\giga\hertz}$, where the availability of large chunks of untapped bandwidth makes it possible to support stringent data rate requirements for future cellular systems~\cite{Rappaport}. Carrier frequencies up to $\SI{52.6}{\giga\hertz}$ with a bandwidth per single carrier up to $\SI{400}{\mega\hertz}$ has been standardized by 3GPP new radio (NR)~\cite{3GPPNROD}. Nevertheless, mm-wave signals suffer from increased isotropic pathloss and can be severely vulnerable to blockage, which results in outages and intermittent channel quality~\cite{YLiCommag}.

The combination of high propagation attenuation and blockage phenomenon advocates for a high-density deployment of infrastructure nodes~\cite{Bhushan}. In this regard, heterogeneous networks (HetNets), where a core macrocell seamlessly cooperates with small cells, have been treated as an available realization of network densification. Nevertheless, equipping all small cells with high performance fiber-based backhaul seems to be economically infeasible. As an attractive cost-efficient substitute to the wired backhaul, self-backhauling, which has been investigated by 3GPP NR as a part of the integrated access and backhaul (IAB) study item~\cite{3GPPNROD}, provides coverage extension and capacity expansion to fully exploit the heterogeneity of HetNets~\cite{Ge}. 

Another encouraging approach to cope with high isotropic pathloss and the sensitivity to blockage effect is the exploitation of beamforming techniques that form narrow beams with high antenna gain for data transmissions~\cite{RappaportTC}. This is possible as the small wavelength, which is one of the distinctive features of mm-wave bands, allows a large number of antenna arrays to be placed in a compact form factor. In general, directional antennas with beamforming reduce multi-user interference, where multiple links can be simultaneously transmitted to fully exploit spatial multiplexing gain. 

Hereof, how to maximize the performance of HetNets with self-backhauling and directional transmission becomes an interesting issue, particularly on the design of link scheduling, resource allocation, and path selection. A naive scheduling that lets the macrocell base station (BS) serve all users in a round robin fashion is neither practical nor efficient~\cite{Yuan}. By contrast, the limited interference at mm-wave bands makes it possible to schedule simultaneous transmissions, where the same radio resource can be allocated to multiple links to improve spatial reuse~\cite{YLiWCNC}. At the same time, when the backhaul link, which connects the associated small cell access point (AP) of a user to the macrocell BS, is weaker compared to the backhaul links to other nearby APs, a dynamic multihop routing scheme is much more favorable to improve overall performance.

\subsection{Related Works}\label{Ch3_Sec1_RW}
Increasing cellular capacity through self-backhauled small cells for IAB has become the primary motivation of many previous works. Some studies emphasized the placement of relay nodes inside cells to improve the signal quality at cell edges~\cite{BLi15, Tabassum}. More general approaches to enhance cellular capacity with multihop backhauling have been considered in~\cite{García-Rois, Sharma}. However, these works limit the number of links at each network node (BS, AP, user equipment (UE), etc.) to a single steerable beam, which does not take advantage of potential spatial reuse provided by highly directional mm-wave antenna arrays. 

When it comes to the exploitation of spatial reuse, the appropriate design of efficient scheduling policy has been suggested as a key challenge in realizing full benefits of multiplexing gain brought by simultaneous links~\cite{Qiao1, HWang, YLiuTWC, Niu2015, Niu2}. Nevertheless, with the exception of some studies that cover limited models~\cite{Niu2015, Niu2}, none of the aforementioned works to date has considered the degree of spatial link isolation. In other words, the interference in mm-wave communications has a much weaker effect than in sub-$\SI{6}{\giga\hertz}$ networks, but not negligible. We think that the question of whether the hypothesis of fully isolated pseudo-wired like links in certain scenarios of mm-wave communications is realistic remains open, as also addressed in~\cite{García-Rois}. The available capacity on simultaneously transmitted link should not be always considered as the same.

Furthermore, recent research efforts, particularly for mm-wave HetNets, have addressed versatile aspects of resource allocation, including frequency resource allocation~\cite{NWang}, power allocation~\cite{Hao}, joint power and time allocation~\cite{YLiu}, and joint scheduling and power allocation optimization~\cite{Goyal}. Besides the consideration of resource allocation, authors in~\cite{Yuan} proposed a polynomial time algorithm for joint scheduling and routing, unlike traditional NP-hard solutions, by extending the work in~\cite{Hajek}. Even though the aforementioned research studies cover a set of resource allocation and routing schemes that enhance system performance under different network configurations and constraint models, the authors focused on either one aspect of resource allocation (e.g., frequency allocation in~\cite{NWang} and power allocation in~\cite{Hao}), or limited combination (transmission duration and power allocation in~\cite{Goyal}, or scheduling and routing in~\cite{Yuan}). The joint solution of scheduling, resource allocation, and routing, have not been considered in any of above works. By contrast, we formulate a joint optimization model with an accurate schedule-dependent representation of data rate taking into account the resource allocation and routing to derive our main results.

\subsection{Contributions}\label{Ch3_Sec1_Ctrbt}
In this paper, we apply binary interference classification, i.e., an interference condition that prevents two links from being simultaneously active and is widely used in the literature (\cite{García-Rois, Niu2015, Niu2, Yuan, Hajek, AZhou}) for designing scheduling algorithms, to the joint scheduling, and resource allocation, and routing optimization. Nevertheless, on top of the binary classification, we formulate the data rate of scheduled links as a function that depends on actual signal-to-interference-plus-noise-ratio (SINR), where links that experience inter-link interference are not blocked completely. Specifically, multiple links are allowed to be simultaneously transmitted provided that mutual interferences among these links are below than a configurable threshold, which is usually considered in practical physical layer implementation. The main contributions of this paper are summarized as follows:
\begin{itemize}
	\item We formulate the optimization problem of joint scheduling and resource allocation (JSRA) for a typical HetNet with multihop IAB structure into a mixed integer nonlinear programing (MINLP) problem, in which the data rate, determined by scheduling, transmission duration, and power constraints, as well as by actual interference, is maximized by fully enabling simultaneous transmission to harvest spatial multiplexing gain.
	\item The constrained optimization problem is then demonstrated to be NP-hard. In order to obtain feasible solution, we systematically decompose the problem into three sub-problems, namely simultaneous transmission scheduling, transmission duration allocation, and transmission power allocation. Based on this, we propose a heuristic scheduling algorithm referred to as conflict graph maximum independent set (CG-MIS) algorithm, a proportional fair transmission duration algorithm, and water-filling power allocation algorithm to solve the optimization problem based on fixed routing decision. 
	\item The fixed routing decision, however, may degrade the system performance when the traffic from different users is congested at some network nodes. Therefore, we propose a dynamic routing (DR) algorithm, where the selection of path between BS and UE depends on real-time network statistics and the user traffic is routed along lightly loaded links, to investigate the ability of further improvement in the data rate achieved by the above scheduling and resource allocation algorithms.
	\item Extensive simulations have been conducted under numerous system parameters to demonstrate that the proposed algorithms outperform benchmark schemes for multiplexing and interference mitigation, in terms of achievable data rate, and closely approach the theoretical optimum, yet with lower latency. The system performance of the proposed algorithms under different frame structures and duplex schemes are also analyzed. In particular, the potential of the proposed algorithms in boosting the system performance of point-to-point (P2P) communications, has also been studied as an extension to the point-to-multipoint (P2M) scenario for the considered HetNets.
\end{itemize}

The remainder of this paper is organized as follows: Section~\ref{Ch3_Sec2} presents the system model and formulates the joint optimization problem of scheduling and resource allocation. The link scheduling algorithm, the transmission duration allocation algorithm, and the transmission power allocation algorithm are proposed in Section~\ref{Ch3_Sec3}. In Section~\ref{Ch3_Sec4}, we introduce the routing algorithm. The performance of the proposed algorithms are evaluated by extensive simulations in Section~\ref{Ch3_Sec5}, followed by a summary concluding this paper in Section~\ref{Ch3_Sec6}.

A conference version of this paper has appeared in~\cite{YLiWCNC}. The current paper extends the previous work with the design of routing algorithm and the application of the proposed joint scheduling, resource allocation, and routing algorithms to P2P communications with focus on data delivery in vehicle platoon. The current paper also includes all the derivations, discussion of the extensions, and more detailed simulations.

\section{System Overview and Problem Formulation}\label{Ch3_Sec2}
In this section, we first introduce the network model, the available connection between BS and UE, and the frame structure that incorporates the concept of space-division multiple access (SDMA) group, in which multiple links are simultaneously scheduled, in Section~\ref{Ch3_Sec2_SubSec1}. Then, we show the channel model that is applied to the calculation of achievable link capacity in Section~\ref{Ch3_Sec2_SubSec2}. Finally, the JSRA optimization problem is formulated in Section~\ref{Ch3_Sec2_SubSec3}, where the complexity of the optimization problem is demonstrated to be NP-hard, such that we are motivated to develop heuristic algorithms to solve the problem efficiently. The important notations and system parameters defined in this section are summarized in Table~\ref{Ch3_Table_SysMod} and will be used in the rest of this paper.
\renewcommand{\arraystretch}{0.6}
\begin{table}[tbp]
	\centering
	\caption{System Model Parameters}
	\label{Ch3_Table_SysMod}
		\begin{tabular}{|l|l|l|}
			\hline
			Notation & Description & Value \\ \hline
			$T$ & Frame length & $\SI{10}{\milli\second}$ \\ \hline
			$l_i$ & Pathloss of a link at carrier frequency $f$ and distance $d_i$ & See~\eqref{eqnPL2}  \\ \hline
			$f$ & Carrier frequency & $\SI{28}{\giga \hertz}$ \\ \hline
			$c$ & Speed of light & $\SI{3e8}{\meter/\second}$ \\ \hline
			$n_\text{L}$ & Pathloss exponent & LOS, NLOS: $2.1,3.17$ \\ \hline
			SF & Shadowing factor & LOS, NLOS: $2.38,6.44$ \\ \hline
			$p{(d_i)}$ & Probability of a link with distance $d_i$ to be LOS & See~\eqref{eqnLOSNLOS2}  \\ \hline
			$d_1$ & Parameters in $d_{1}/d_{2}$ model & $20$ \\ \hline
			$d_2$ & Parameters in $d_{1}/d_{2}$ model & $39$ \\ \hline
			$g_i$ & Antenna gain of link $i$ (Antenna array vertical $\times$ horizontal) & BS/AP, UE:  $16\times 8$, $4\times 4$ \\ \hline
			$\text{SINR}_i$ & SINR of link $i$ & See~\eqref{eqnSINR} \\ \hline
			$p_i$ & Transmission power of link $i$ & See~\eqref{eqnoptPS} \\ \hline
			$\eta$ & Thermal noise power & $\SI{2e-11}{\watt}$ \\ \hline
			$\delta_{i}^{(k)}$ & Schedule indicator of link $i$ in SDMA group $k$ & See~\eqref{eqnSI}\\ \hline
			$r_i$ & Channel capacity of link $i$ & See~\eqref{eqnCC} \\ \hline
			$b_i$ & Allocated bandwidth of link $i$ & See~\eqref{eqnbw} \\ \hline
			$n^{(k)}$ & Number of slots in SDMA group $k$ & See~\eqref{eqnAllSlot} \\ \hline
			$\mathcal{{M}}_{s,k}$ & Set of simultaneously transmitted links from node $s$ in SDMA group $k$  & See~\eqref{eqnpower} \\ \hline
			$P_\text{max}$ &  Maximum transmission power of BS/AP/UE & $\SI{1}{\watt}$ for BS/AP, UE: $\SI{1}{\watt}$, $\SI{0.1}{\watt}$ \\ \hline
			$\sigma$ & Inter-link interference threshold & $\SI{1e-8}{\watt}$ \\ \hline
			$B$ & System bandwidth & $\SI{1}{\giga \hertz}$ \\ \hline
	\end{tabular}
\end{table}

\subsection{Network, Connection, and Frame Structure}\label{Ch3_Sec2_SubSec1}
We consider a typical HetNet that consists of a macrocell BS, a set of small cell APs, and UEs that are associated either directly with the BS, or with the geographically closest AP and connected to the BS via multihop. 
Fig.~\ref{Ch3_Fig_Graph} shows an example of the considered HetNet with one BS, two APs and four UEs. We represent the network as a directed graph $\mathcal{G}(\mathcal{V},\mathcal{E})$, where $\mathcal{V}$ indicates the set of nodes (BS, APs and UEs) and $\mathcal{E}$ indicates the set of links. For the exampled network in Fig.~\ref{Ch3_Fig_Graph}, an abstracted graph model, which we refer to as \textit{link graph}, is also illustrated in Fig.~\ref{Ch3_Fig_Graph}. Admissible connections are BS$\rightleftharpoons$AP, BS$\rightleftharpoons$UE, and AP$\rightleftharpoons$UE, with both downlink and uplink traffic flows along arbitrary routes. 
\begin{figure}[tbp]
	\centering
	\includegraphics[width=0.9\linewidth]{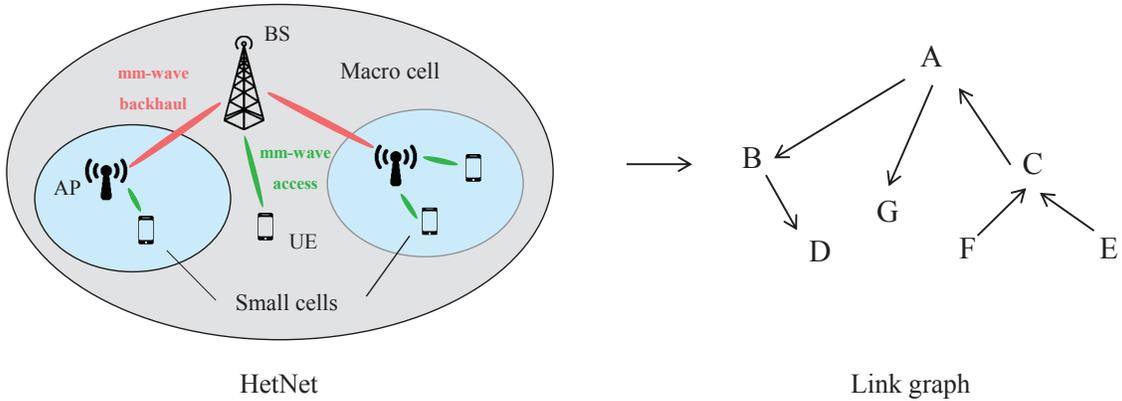}
	\caption{An exampled HetNet consisting of one BS, two APs, and four UEs is presented by a directed graph.} \label{Ch3_Fig_Graph}
\end{figure}

As in~\cite{YLiCommag}, BS and AP support mm-wave bands for backhaul and access transmission and reception (in-band backhauling). The transmission requests and corresponding time/frequency synchronization information is assumed to be collected by sub-$\SI{6}{\giga\hertz}$ communications. For the data transmission of each UE associated with APs, a predefined route is selected, where we are able to focus on the JSRA optimization. Nevertheless, the design of DR is introduced in Section~\ref{Ch3_Sec4} as an extension to further improve the network performance. 

We further consider a time-division duplex mm-wave frame structure as shown in Fig.~\ref{Ch3_Fig_Frame}, where the system time is divided into consecutive frames with period $T$. Each cycle begins with a beacon phase, followed by a data transmission phase modeled as a slotted-based time period, in which transmissions between any valid pair of nodes can be scheduled.
\begin{figure}[tbp]
	\centering
	\includegraphics[scale=0.8]{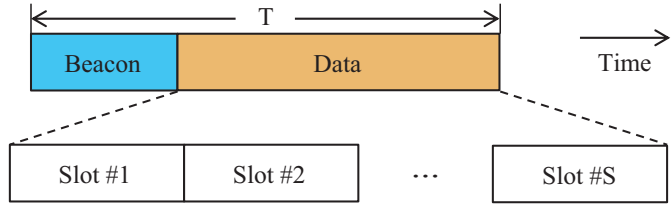}
	\caption{Frame structure including a slotted data transmission phase.} \label{Ch3_Fig_Frame}
\end{figure}

Time-division multiple access (TDMA) is widely adopted for mm-wave channel access in 5G networks~\cite{Qiao1, Niu2, Niu2015}, where within the period of each frame, it is assumed that network topology and channel condition remain unchanged. 
In TDMA scheme, each slot is exclusively occupied by a single link. By enabling the possibility of spatial multiplexing, multiple simultaneous transmissions can be scheduled in each slot. Hence, we can allocate more transmission duration to each link, such that the achievable data rate of each link is improved without other sophisticated techniques. An example of the comparison of TDMA and simultaneous transmission in slot allocation for six transmission links in a frame with a ten-slot data phase is illustrated in Fig.~\ref{Ch3_Fig_TDMA_simul}, where the corresponding number of allocated slots are written below the links.
\begin{figure}[tbp]
	\centering
	\includegraphics[scale=0.85]{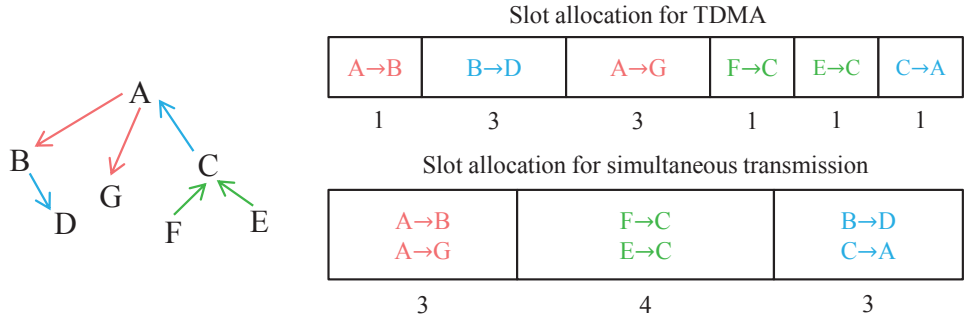}
	\caption{An example of slot allocation for TDMA and simultaneous transmission.} \label{Ch3_Fig_TDMA_simul}
\end{figure}

\subsection{Channel, Traffic, and Link SINR}\label{Ch3_Sec2_SubSec2}
We define a \textit{SDMA group} as a transmission interval that consists of consecutive slots allocated to a link when simultaneous transmission is enabled. For simplicity, in the rest of the paper we use the term ``group'' to represent the SDMA group. In the example depicted in Fig.~\ref{Ch3_Fig_TDMA_simul}, the data phase consisting of ten slots is separated in three groups, in which different links are simultaneously scheduled. However, each link can be scheduled only once in a frame. 

For an accurate generation of link capacity, we compute the isotropic pathloss for link $i$ transmitted from node $m$ towards node $n$ at distance $d_i$ in meters, denoted as $l_i$, given by~\cite{RappaportTC} as
\begin{equation} 
l_i = \left(\frac{4{\pi}f}{c}\right)^2 \cdot {d_i}^{n_\text{L}} \cdot \text{SF}, \label{eqnPL2}
\end{equation}
where link $i$ is determined to be LOS or NLOS with probability $p{(d_i)}$ according to~\cite{RappaportTC, YLiTWC} as
\begin{equation} 
p{(d_i)} = \min\left(\frac{d_1}{d_i},1\right)\left(1-e^{-{d_i}/{d_2}}\right) + e^{-{d_i}/{d_2}}. \label{eqnLOSNLOS2}
\end{equation}
Here, $f$ indicates the carrier frequency in $\si{\hertz}$, and $n_\text{L}$ represents the pathloss exponent. $c$ is the speed of light. The impact of objects such as trees, cars, etc.\ is modeled separately using the shadowing factor (SF). Further, an additive white Gaussian noise (AWGN) channel is assumed within all links, and the channel knowledge is assumed to be available at the BS. 

We further apply similar modeling of antenna gain as in~\cite{YLiTWC} and denote the antenna gain of link $i$ as $g_i$. Unlike the recent works~\cite{Qiao1, HWang, Niu2}, which have assumed a pseudo-wired behavior for mm-wave links, we do not assume that the interference is negligible but rather compute real mutual interference between simultaneous links within each frame. Then, the instantaneous SINR of link $i$, denoted as $\text{SINR}_{i}$, is provided by
\begin{equation}
\text{SINR}_{i} = \frac{p_i g_i l_i^{-1}}{\eta + \sum_{j}p_j g_j l_j^{-1}},
\label{eqnSINR}
\end{equation}
where $p_i$, $g_i$, $l_i$, and $\eta$ represent the transmission power, the antenna gain, the pathloss, and the thermal noise power of link $i$, respectively. The experienced interference of link $i$ is model by $\sum_{j}p_j g_j l_j^{-1}$ which summarizes the receive power of all interfered link $j$ at link $i$. 

\subsection{Problem Formulation}\label{Ch3_Sec2_SubSec3}
We consider $M$ transmission links to be scheduled in a given frame that consists of $N$ slots. 
For each link $i$, we defined a logic indicator $\delta_{i}^{(k)}$ that controls the schedule policy of whether link $i$ is scheduled in group $k$. Specifically, the policy is modeled as
\begin{equation} 
\delta_{i}^{(k)} = \begin{cases}
1, & \text{Link $i$ is scheduled in group $k$,} \\
0, & \text{Otherwise.}
\end{cases}
\label{eqnSI} 
\end{equation}

Then, the actual capacity of link $i$, denoted as $r_{i}$, can be represented according to Shannon channel capacity equation as
\begin{equation}
r_{i} = b_i \cdot \log{ \left( 1 + \delta_{i}^{(k)} \text{SINR}_{i} \right) },
\label{eqnCC}
\end{equation}
where $b_i$ indicates the available bandwidth at link $i$ in $\si{\hertz}$. We further assume that the $N$ slots of the given frame are allocated into $K$ groups, and the number of slots allocated in each group is denoted as $n^{(k)}$.

%
%
Define a scheduling policy $\bm{\delta}$, a slot allocation policy $\bm{n}$, and a power allocation policy $\bm{p}$ as the set of binary vector $\delta_{i}^{(k)}$, the set of slot allocation ${n}^{(k)}$, and the set of power allocation ${p}_i$, $\forall k \in \{1,\dotsc,K\}$ and $\forall i \in \{1,\dotsc,M\}$, respectively. Then, we can define the achievable data rate for all the $M$ links in the considered frame with $K$ groups, given the scheduling policy $\bm{\delta}$, the slot allocation policy $\bm{n}$, and the power allocation policy $\bm{p}$, as
\begin{equation}
\sum_{i=1}^{M} \sum_{k=1}^{K} \frac{r_{i}n^{(k)}}{N}.
\label{eqnTP}
\end{equation}
As the total number of slots $N$ in each frame is fixed, then the objective function, which is the achievable data rate defined in~\eqref{eqnTP} and to be maximized, can be further simplified as
\begin{equation}
\sum_{i=1}^{M} \sum_{k=1}^{K} r_{i}n^{(k)}.
\label{eqnobj}
\end{equation}

Now, we analyze the system constraints of the optimization problem. First of all, the scheduling policy $\bm{\delta}$ is ruled by the following two constraints:
\begin{equation}
\sum_{k=1}^{K} \delta_{i}^{(k)} = 1,\; \forall i \in \{1,\dotsc,M\},
\label{eqnschd}
\end{equation}
and
\begin{equation}
\delta_{i}^{(k)} + \delta_{j}^{(k)} \leq 1,\; \forall\; \text{sequential link $i$ and $j$},\; \forall i,j \in \{1,\dotsc,M\}, \forall k \in \{1,\dotsc,K\}.
\label{eqnseq}
\end{equation}
Here,~\eqref{eqnschd} indicates that each link can be scheduled only once in each frame (in one of the $K$ groups), as demonstrated in Section~\ref{Ch3_Sec2_SubSec2}. However, each link scheduled in group $k$ can be allocated with multiple slots, namely $n^{(k)}$. Further, due to half-duplex constraint, the sequential links (e.g.\ backhaul and access links as edge $A \rightarrow B$ and $B \rightarrow D$ in Fig.~\ref{Ch3_Fig_Graph}) cannot be scheduled in the same group, which is governed by the constraint demonstrated in~\eqref{eqnseq}.

Next, the constraint on the slot allocation policy $\bm{n}$, where the total number of allocated slots in all groups should not be larger than $N$, is represented as
\begin{equation}
\sum_{k=1}^{K} n^{(k)} \leq N,\; \forall k \in \{1,\dotsc,K\}.
\label{eqnslot}
\end{equation}

Lastly, the summarized allocated power of the links transmitted from the same node, say node $s$, in group $k$, should not exceed the total available transmission power of the transmitter. Denoting the set of simultaneously transmitted links from node $s$ in group $k$ as $\mathcal{{M}}_{s,k}$, the constraint on the power allocation policy $\bm{p}$ is given by
\begin{equation}
\sum_{i \in \mathcal{M}_{s,k}} p_{i} \leq P_\text{max},
\label{eqnpower}
\end{equation}
where $P_\text{max}$ indicates the maximum transmission power of BS/AP/UE.

Finally, we can formulate our JSRA optimization problem to maximize the data rate $r_{i}$, under the constraints of scheduling policy $\bm{\delta}$, slot allocation policy $\bm{n}$, and power allocation policy $\bm{p}$, as follows:
\begin{problem}
	(JSRA optimization)
	\begin{align*}
	\underset{\bm{\delta},\bm{n},\bm{p}}\max &\null \quad \sum_{i=1}^{M} \sum_{k=1}^{K} r_{i} n^{(k)}, \\
	\textnormal{s.t.} &\null \quad \textnormal{ constraints}~\eqref{eqnschd}\text{--}\eqref{eqnpower}. \numberthis \label{eqnopt}
	\end{align*}\label{Ch3_prb1}
\end{problem}
\vspace{-2.5\topsep}
The maximization problem indicated in~\eqref{eqnopt} with constraints~\eqref{eqnschd}--\eqref{eqnpower} is a MINLP problem~\cite{JLee}, where the variables are categorized as
\begin{itemize}
	\item Integer variables: $n^{(k)}$,
	\item Continuous variables: $p_{i}$,
	\item Binary variables: $\delta_{i}^{(k)}$.
\end{itemize}
In general, one among the simplest MINLP problem is the 0--1 Knapsack problem, which is a class of problems and proven to be typically NP-hard~\cite{JLee}. Nevertheless, the multiplication of the above variables in~\eqref{eqnopt} further complicates the proposed optimization problem~\ref{Ch3_prb1} and makes it even more complex than the 0--1 Knapsack problem. Specifically, there exists the third-order term $\delta_{i}^{(k)}p_{i}n^{(k)}$ in~\eqref{eqnopt}, in which the coupling exists among the constraints where slot and power allocation policies rely on the results of link scheduling. Therefore, the considered optimization problem~\ref{Ch3_prb1} is NP-hard.
In the next section, we propose a heuristic JSRA algorithm to decouple the correlated variables and solve problem~\ref{Ch3_prb1} efficiently with low complexity.

\section{Scheduling and Resource Allocation Algorithm}\label{Ch3_Sec3}
As mentioned in Section~\ref{Ch3_Sec2}, the scheduling policy $\bm{\delta}$, the slot allocation policy $\bm{n}$, and the power allocation policy $\bm{p}$ are three key and correlated terms for solving the optimization problem~\ref{Ch3_prb1}. However, the slot and power allocation policy cannot be determined until the scheduling decision is made. Specifically, the slot allocation depends on the demand of links that are scheduled in each SDMA group, and the number of scheduled links affects the power allocation policy of the corresponding group. Therefore, in this section we propose a heuristic JSRA algorithm to solve problem~\ref{Ch3_prb1}, where simultaneous transmission is fully exploited to increase data rate. 

\subsection{CG-MIS Scheduling Algorithm}\label{Ch3_Sec3_SubSec1}
To better demonstrate the relationship among different links presented by constraints~\eqref{eqnschd} and~\eqref{eqnseq}, we introduce an appropriate concept named \textit{conflict graph} (CG). Different from the link graph depicted in Fig.~\ref{Ch3_Fig_Graph}, in a conflict graph, each node represents one link in the network, and there is an edge connecting two nodes if ``conflict'' exists. Specifically, the conflicts can be derived from sequential links (e.g., backhaul and access links) as described in~\eqref{eqnseq}, due to half-duplex constraint, or from links that interfere with each other severely. For the latter case, we define an interference threshold $\sigma$ as the criterion that judges whether an edge exists between two non-sequential links (nodes)\footnote{The interference information is assumed to be acquired from e.g.\ interference sensing procedure. This issue, as well as other signaling designs, have been addressed by the other works of us and are beyond the scope of this paper.}: An edge exists if the SINR of any of the two corresponding links is larger than $\sigma$. An example of CG construction is illustrated in Fig.~\ref{Ch3_Fig_ConGraph}.
\begin{figure}[tbp]
	\centering
	\includegraphics[scale=0.9]{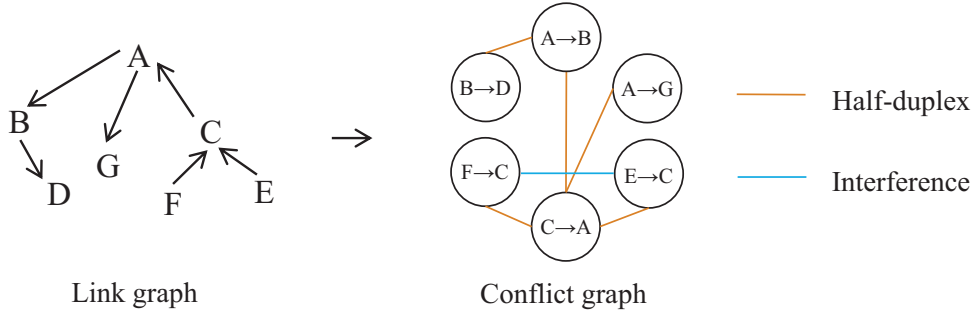}
	\caption{Conflict graph construction.} \label{Ch3_Fig_ConGraph}
\end{figure}

Based on the CG, we propose a maximum independent set (MIS) based scheduling algorithm, namely the CG-MIS algorithm, to distribute links into different groups. In graph theory, an independent set is a subset of nodes in a graph, in which no pair of nodes are adjacent. A maximal independent set is either an independent set such that adding any other node to the set forces the set to contain an edge or the set of all nodes of an empty graph. It is demonstrated in~\cite{Karp} that the computational complexity of finding the MIS of a general graph is NP-hard, and there are no very efficient algorithms that are capable to find optimal solution in polynomial time. Therefore, we utilize the minimum-degree greedy algorithm to solve the problem of finding the MIS of a CG~\cite{Halldorsson}. 

We denote the CG as $\mathcal{G}_\textnormal{C}(\mathcal{V}_\textnormal{C},\mathcal{E}_\textnormal{C})$ , where $\mathcal{V}_\textnormal{C}$ and $\mathcal{E}_\textnormal{C}$ represent the set of nodes and the set of edges in the CG, respectively. For any node $v \in \mathcal{V}_\textnormal{C}$, we define its neighbors as $\mathcal{N}_{v}$, which consists of all adjacent nodes of $v$. The degree of any node $v \in \mathcal{V}_\textnormal{C}$ is denoted by $\Delta_{v}$. Then, the CG-MIS scheduling algorithm is summarized in Algorithm~\ref{Ch3_Alg1}. In the algorithm, links are iteratively scheduled into each group until all links have been traversed, as indicated in line~\ref{Ch3_Alg1:1}. At the beginning of each iteration, the set for recording the scheduled links in the corresponding group (say group $k$), denoted as $\mathcal{V}^{k}$, is initialized as empty. Line~\ref{Ch3_Alg1:5}--\ref{Ch3_Alg1:9} describe the minimum-degree greedy scheduling algorithm of group $k$. In line~\ref{Ch3_Alg1:10}, the traversed links are removed from the set $\mathcal{V}_\textnormal{C}$, which is later used for updating the non-traversed set $\mathcal{V}^{k,u}$ for next group in line~\ref{Ch3_Alg1:4}.
\begin{algorithm}[tbp]
	\SetAlgoLined
	\renewcommand\baselinestretch{1}\selectfont
	\KwIn{Conflict graph $\mathcal{G}_\textnormal{C}(\mathcal{V}_\textnormal{C},\mathcal{E}_\textnormal{C})$}
	\KwOut{Scheduling policy $\bm{\delta}$}
	\begin{itemize}
		{\item $\mathcal{V}_\textnormal{C}$: Set of nodes}
		{\item $\mathcal{E}_\textnormal{C}$: Set of edges}
		{\item $\mathcal{V}^{k}$: Set of scheduled links in group $k$}
		{\item $\mathcal{V}^{k,u}$: Set of unscheduled links in group $k$ }
		{\item $\Delta_{v}$: Degree of node $v$}
		{\item $\mathcal{N}_{v}$: Neighbors of node $v$}
		{\item $k$: Iterator (index of group) }
	\end{itemize}
	\textit{Initialization}:
	$k=0$ \\
	\Begin{
		\everypar={\nl}
		\While{$\mathcal{V}_\textnormal{C}\neq\varnothing$}{\label{Ch3_Alg1:1}
			$k=k+1$\;
			$\mathcal{V}^{k}=\varnothing$\;
			$\mathcal{V}^{k,u}=\mathcal{V}_\textnormal{C}$\; \label{Ch3_Alg1:4}
			\While{$\mathcal{V}^{k,u}\neq\varnothing$}{\label{Ch3_Alg1:5}
				Get $v \in \mathcal{V}^{k,u}$ where $\Delta_{v}=\min_{v' \in \mathcal{V}^{k,u}} \Delta_{v'}$\; \label{Ch3_Alg1:6}
				$\mathcal{V}^{k} = \mathcal{V}^{k} \cup v$\; \label{Ch3_Alg1:7}
				$\mathcal{V}^{k,u} = \mathcal{V}^{k,u} - {\{v \cup \mathcal{N}_{v}\}}$\; \label{Ch3_Alg1:8}
			} \label{Ch3_Alg1:9}
			$\mathcal{V}_\textnormal{C} = \mathcal{V}_\textnormal{C} - \mathcal{V}^{k}$\;\label{Ch3_Alg1:10}
		}
		\textbf{Return} $\mathcal{V}^{k}$ for each group $k$\;
	}
	\caption{CG-MIS Link Scheduling Algorithm}\label{Ch3_Alg1}
\end{algorithm}

The computational complexity of the algorithm is $\mathcal{O}(|{\mathcal{V}_\textnormal{C}}|^2)$, compared to $\mathcal{O}(|{\mathcal{V}_\textnormal{C}}|^2 \cdot 2^{|{\mathcal{V}_\textnormal{C}}|^2})$ of naive brute force scheme, where $|{\mathcal{V}_\textnormal{C}}|$ indicates the cardinality of the node set $\mathcal{V}_\textnormal{C}$, namely the number of nodes in the set. The performance analysis of the algorithm is presented as below.
\begin{lemma}
	Minimum-degree greedy algorithm outputs an independent set $\{\mathcal{V}^{k}|k=1,\dotsc,K\}$ such that $|\mathcal{V}^{k}| \geq \frac{|\mathcal{V}_\textnormal{C}|}{\Delta + 1}$ where $\Delta$ is the maximum degree of any node in the graph. \label{Ch3_Lem1}
\end{lemma}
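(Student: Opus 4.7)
The plan is to verify two separate properties of the output produced by the inner while loop (lines~\ref{Ch3_Alg1:5}--\ref{Ch3_Alg1:9}) of Algorithm~\ref{Ch3_Alg1}: first, that each set $\mathcal{V}^{k}$ is genuinely an independent set in $\mathcal{G}_\textnormal{C}$; and second, that its cardinality satisfies the claimed inverse-degree lower bound. The greedy structure of the algorithm is the key: on every pass of the inner loop, a minimum-degree vertex $v$ is selected, placed in $\mathcal{V}^{k}$, and then both $v$ and its neighborhood $\mathcal{N}_v$ are deleted from the candidate pool $\mathcal{V}^{k,u}$.

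For independence, I would argue by an invariant on the inner loop. At each step, after line~\ref{Ch3_Alg1:8}, no vertex remaining in $\mathcal{V}^{k,u}$ is adjacent to any vertex already in $\mathcal{V}^{k}$, because its neighborhood was purged from $\mathcal{V}^{k,u}$ at the moment it entered $\mathcal{V}^{k}$. This invariant is trivially true at initialization ($\mathcal{V}^{k}=\varnothing$) and preserved by each execution of lines~\ref{Ch3_Alg1:6}--\ref{Ch3_Alg1:8}. When the loop terminates, any two distinct vertices of $\mathcal{V}^{k}$ were non-adjacent at the time the second was chosen, so $\mathcal{V}^{k}$ is independent in $\mathcal{G}_\textnormal{C}$.

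For the cardinality bound, I would use a simple accounting argument on the inner loop. Let $n$ denote the size of the candidate pool at the start of the loop (so $n = |\mathcal{V}_\textnormal{C}|$ in the case of the first group, or the residual size in later outer iterations, which I will address below). Each inner iteration picks one vertex $v$ and deletes at most $\Delta_{v} + 1 \le \Delta + 1$ vertices from $\mathcal{V}^{k,u}$ (the vertex itself together with its neighbors). The loop exits only when $\mathcal{V}^{k,u}=\varnothing$, so the total number of deletions equals $n$. Hence
\begin{equation*}
|\mathcal{V}^{k}|\,(\Delta + 1) \;\geq\; \sum_{v \in \mathcal{V}^{k}}\bigl(\Delta_v + 1\bigr) \;\geq\; n,
\end{equation*}
which rearranges to $|\mathcal{V}^{k}| \geq n/(\Delta+1)$.

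The main obstacle is not the arithmetic above but the interpretive question of what $|\mathcal{V}_\textnormal{C}|$ and $\Delta$ refer to across outer iterations, since the algorithm shrinks the graph between groups via line~\ref{Ch3_Alg1:10}. I would clarify in the proof that the bound is naturally a per-iteration statement applied to the current residual subgraph at the start of group $k$; since the residual graph's maximum degree is no larger than $\Delta$ of the original $\mathcal{G}_\textnormal{C}$, the bound $n_k/(\Delta+1)$ holds for each $k$. If a sharper statement is desired, one can replace the uniform bound $\Delta+1$ by the vertex-specific $\Delta_v + 1$ and obtain the Caro--Wei-type estimate $|\mathcal{V}^{k}| \geq \sum_{v} 1/(\Delta_v + 1)$, which tightens the argument without requiring any additional machinery.
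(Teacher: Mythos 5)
Your proof is correct and uses essentially the same argument as the paper: charging each removed vertex to the chosen vertex that caused its removal, so that each $v\in\mathcal{V}^{k}$ accounts for at most $\Delta+1$ deletions, giving $(\Delta+1)|\mathcal{V}^{k}|\geq|\mathcal{V}_\textnormal{C}|$. Your additions --- the explicit loop invariant for independence and the clarification that the bound applies per outer iteration to the residual subgraph --- address points the paper's proof leaves implicit, but they do not change the underlying approach.
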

\begin{proof}
	Denote $\mathcal{V}_\textnormal{C} \backslash \mathcal{V}^k$ as the complement of set $\mathcal{V}^k$ for set $\mathcal{V}_\textnormal{C}$, namely $\mathcal{V}_\textnormal{C} = \mathcal{V}^k \cup \mathcal{V}_\textnormal{C} \backslash \mathcal{V}^k$. Then, we upper bound the number of nodes in $\mathcal{V}_\textnormal{C} \backslash \mathcal{V}^k$, i.e. $|\mathcal{V}_\textnormal{C} \backslash \mathcal{V}^k|$, as follows: A node $u$ is in $\mathcal{V}_\textnormal{C} \backslash \mathcal{V}^k$ because it is removed as a neighbor of some node $v \in V^k$ when greedily added $v$ to $\mathcal{V}^k$. Associate $u$ to $v$. A node $v \in \mathcal{V}^k$ can be associated at most $\Delta$ times as it has at most $\Delta$ neighbors. Hence, we have $|\mathcal{V}_\textnormal{C} \backslash \mathcal{V}^k| \leq \Delta |\mathcal{V}^k|$. Then, as every node is either in $\mathcal{V}^k$ or $\mathcal{V}_\textnormal{C} \backslash \mathcal{V}^k$, we have $|\mathcal{V}_\textnormal{C} \backslash \mathcal{V}^k| + |\mathcal{V}^k| = |\mathcal{V}_\textnormal{C}|$ and therefore $(\Delta+1)|\mathcal{V}^k| \geq |\mathcal{V}_\textnormal{C}|$, which implies that $|\mathcal{V}^k| \geq \frac{|\mathcal{V}_\textnormal{C}|}{\Delta+1}$.
\end{proof}
\begin{corollary}
	Minimum-degree greedy algorithm gives a $\frac{1}{\Delta+1}$ approximation for MIS in graphs of degree at most $\Delta$. \label{Ch3_Cor1}
\end{corollary}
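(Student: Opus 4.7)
The plan is to deduce the approximation ratio as an almost immediate consequence of Lemma~\ref{Ch3_Lem1}, combined with a trivial upper bound on the optimum. Recall that an algorithm is a $\rho$-approximation for a maximization problem if, on every input, the size of its output is at least $\rho$ times the size of the optimal solution. Here the optimum is the true maximum independent set of the conflict graph $\mathcal{G}_\textnormal{C}$, whose cardinality I denote by $\alpha(\mathcal{G}_\textnormal{C})$, and the algorithm's output is the independent set $\mathcal{V}^k$ returned by the greedy routine at iteration $k$.

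The single observation I need on top of Lemma~\ref{Ch3_Lem1} is the crude but sufficient bound $\alpha(\mathcal{G}_\textnormal{C}) \leq |\mathcal{V}_\textnormal{C}|$, which holds because any independent set is a subset of the full vertex set. Chaining this with the lower bound $|\mathcal{V}^k| \geq |\mathcal{V}_\textnormal{C}|/(\Delta+1)$ supplied by Lemma~\ref{Ch3_Lem1} then gives
\[
\frac{|\mathcal{V}^k|}{\alpha(\mathcal{G}_\textnormal{C})} \;\geq\; \frac{|\mathcal{V}_\textnormal{C}|/(\Delta+1)}{|\mathcal{V}_\textnormal{C}|} \;=\; \frac{1}{\Delta+1},
\]
which is precisely the statement of the corollary. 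No separate case analysis is needed, since both bounds hold for an arbitrary graph of maximum degree at most $\Delta$.

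There is really no substantive obstacle here: the technical content is entirely encapsulated in Lemma~\ref{Ch3_Lem1}, and the corollary merely restates that size bound in the language of approximation algorithms. The only minor subtlety worth checking is whether replacing $\alpha(\mathcal{G}_\textnormal{C})$ with the coarser $|\mathcal{V}_\textnormal{C}|$ is tight enough to preserve the claimed ratio, and indeed it is, because both the numerator and denominator of the ratio scale with $|\mathcal{V}_\textnormal{C}|$. A disjoint union of $(\Delta+1)$-cliques could be mentioned as a tightness witness in which Lemma~\ref{Ch3_Lem1} and the corollary are both attained with equality, but I would leave this remark out of the formal proof since it is not required for the statement.
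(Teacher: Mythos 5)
Your proof is correct and follows exactly the route the paper intends: the paper simply declares the corollary ``a straightforward result of Lemma~\ref{Ch3_Lem1},'' and your argument—combining the lemma's lower bound $|\mathcal{V}^k| \geq |\mathcal{V}_\textnormal{C}|/(\Delta+1)$ with the trivial bound $\alpha(\mathcal{G}_\textnormal{C}) \leq |\mathcal{V}_\textnormal{C}|$—is the natural way to fill in that one-line justification. No gaps; the tightness remark about disjoint $(\Delta+1)$-cliques is a nice optional aside.
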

\begin{proof}
	A straightforward result of Lemma~\ref{Ch3_Lem1}.
\end{proof}
As the maximum degree of a graph varies with the graph topology, we are interested in the performance comparison between greedy algorithm and brute force scheme in general case, where the performance ratio can be characterized by the average degree of the graph, denoted as $\bar{d}$. This is presented in the following theorem:
\begin{theorem}
	Minimum-degree greedy algorithm achieves a $\frac{2\bar{d}+3}{5}$ performance ratio for MIS in graphs of average degree $\bar{d}$. \label{Ch3_Thm2}
\end{theorem}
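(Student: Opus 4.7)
The plan is to refine the per-iteration bound that gave Lemma~\ref{Ch3_Lem1} and Corollary~\ref{Ch3_Cor1}. Lemma~\ref{Ch3_Lem1} produced a $\Delta+1$ factor by charging at most $\Delta+1$ optimal-set vertices to each greedy pick (the vertex itself plus its $\Delta$ neighbours). To improve this to the average-degree bound $(2\bar d+3)/5$, I would move from a worst-case degree charge to an amortized charge that couples the degree bound with an edge-counting argument, exploiting the fact that the minimum-degree greedy choice at iteration $i$ has degree $d_i$ no larger than the average degree $\bar d_i$ of the current residual graph $G_i$.

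First, I would set up notation: let $\textnormal{OPT}$ denote a maximum independent set of $\mathcal{G}_\textnormal{C}$ and $\textnormal{GREEDY}$ the output of Algorithm~\ref{Ch3_Alg1}; let $v_i$ be the vertex picked at iteration $i$ with degree $d_i$ in $G_i$; and let $\beta_i$ be the number of vertices of $\textnormal{OPT}$ lying in $\{v_i\}\cup N_{G_i}(v_i)$, so that $|\textnormal{OPT}|\le\sum_i\beta_i$ and $|\textnormal{GREEDY}|=\sum_i 1$. The goal is to prove $\sum_i\beta_i \le \frac{2\bar d+3}{5}\sum_i 1$.

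Second, I would split iterations by whether $v_i\in\textnormal{OPT}$ or not. When $v_i\in\textnormal{OPT}$, independence of $\textnormal{OPT}$ forces $\beta_i=1$, so such iterations trivially satisfy the target ratio. When $v_i\notin\textnormal{OPT}$, the $\beta_i$ OPT vertices are pairwise non-adjacent and each has $G_i$-degree at least $d_i$ (by the minimum-degree rule), so the number of edges incident to these OPT vertices is at least $\beta_i d_i$, and the removal of $\{v_i\}\cup N_{G_i}(v_i)$ destroys at least that many edges. I would then define a potential $\Phi_i = a\,n_i + b\,m_i$ with constants $a,b>0$ to be chosen, track its decrease $\Phi_i-\Phi_{i+1}$ per iteration, and use the telescoping sum together with $\Phi_0 = a\,n + b\cdot\frac{\bar d\,n}{2}$ to derive a linear inequality linking $|\textnormal{OPT}|$, $|\textnormal{GREEDY}|$, $n$, and $\bar d$.

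The main obstacle, and the reason the numerator is $2\bar d+3$ rather than a cleaner expression, is choosing the constants $(a,b)$ that simultaneously dominate both the $v_i\in\textnormal{OPT}$ case (where only the vertex is removed and few edges vanish) and the $v_i\notin\textnormal{OPT}$ case (where many vertices are removed but many edges vanish too). The optimal convex combination is dictated by the tight instance, which is a disjoint union of cliques of the right size; verifying tightness on this family and matching coefficients through the resulting linear program is what pins down the $5$ in the denominator. Once $(a,b)$ are fixed, the algebra telescopes into the stated ratio, completing the proof.
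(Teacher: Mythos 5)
First, a point of reference: the paper does not actually prove this theorem. Its ``proof'' is a one-line deferral to~\cite{Halldorsson}, noting only that the argument there uses the greedy analysis ``in combination with a fractional relaxation technique.'' So your proposal cannot be compared to an in-paper argument; it has to stand on its own, and as written it does not. What you give is a plan rather than a proof: the decisive content --- the choice of the coefficients $(a,b)$ and the verification that the potential drop dominates the charged OPT vertices in every case --- is exactly the part you defer to a later ``linear program,'' while the observations you do carry out (that $\beta_i=1$ when $v_i\in\textnormal{OPT}$, and that the edges deleted at iteration $i$ number at least $\beta_i d_i$ because OPT is independent and $v_i$ has minimum degree) are correct but are essentially the easy charging step already present in the proof of Lemma~\ref{Ch3_Lem1}.

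Beyond incompleteness, there are two concrete defects. (i) A linear potential $\Phi_i=a\,n_i+b\,m_i$ telescopes to an inequality of the form $c_1|\textnormal{OPT}|\le a\,n+b\,m+c_2|\textnormal{GREEDY}|$, i.e.\ linear in $n$, $m$, and the two set sizes. The target, $5|\textnormal{OPT}|\le\bigl(4m/n+3\bigr)|\textnormal{GREEDY}|$, contains the product $m\cdot|\textnormal{GREEDY}|/n$; since $|\textnormal{GREEDY}|\le n$, this is \emph{stronger} than the corresponding linear surrogate $5|\textnormal{OPT}|\le 4m+3|\textnormal{GREEDY}|$, so no choice of $(a,b)$ in a purely linear telescoping argument can recover it. Converting a standalone $n$ or $m$ term into a $|\textnormal{GREEDY}|$ term requires a Tur\'an-type lower bound $|\textnormal{GREEDY}|\ge n/(\bar d+1)$, which itself needs a convexity (Cauchy--Schwarz over the per-iteration degrees $d_i$, via $e_i\ge k_i(k_i-1)/2$) step that your scheme never introduces; this nonlinear ingredient is precisely what the ``fractional relaxation'' in~\cite{Halldorsson} supplies. (ii) Your proposed tight family is wrong: on a disjoint union of cliques the minimum-degree greedy removes one whole clique per iteration and returns one vertex from each, exactly matching OPT, so the ratio there is $1$ and that family cannot pin down $a$, $b$, or the denominator $5$. (A related sanity check exposing the same structural issue: $(2\bar d+3)/5<1$ for $\bar d<1$, and an isolated-vertex iteration has local ratio $1>3/5$, so no purely per-iteration charge can close the argument; the bound must be assembled globally.)
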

\begin{proof}
	The proof is provided in~\cite{Halldorsson} in combination with a fractional relaxation technique and not addressed here due to space limitation.
\end{proof}

\subsection{Slot Allocation Algorithm}\label{Ch3_Sec3_SubSec2}
With the simultaneous transmission scheduling policy, we propose a proportional fair time resource allocation scheme to determine the transmission duration (slot) for each group. This algorithm gets the required slot of each link by TDMA scheme (exampled in Fig.~\ref{Ch3_Fig_TDMA_simul}) in a group, calculates the maximal number of all the required slots, and proportionally allocate all slots in a frame to the links in each group. As more time slots are allocated to each link, the achievable data rate of these links are increased.

We denote the number of slots required by link $i$ as $n_{i}$, then the maximal number of required slots among all links in group $k$, denoted as $n^{(k)}_\textnormal{max}$, can be obtained by
\begin{equation}
n^{(k)}_\textnormal{max}=\max\limits_{i \in \mathcal{V}^{k}}{n_{i}}. \label{eqnMaxSlot}
\end{equation}
Based on this, the total $N$ slots in a frame are allocated to each group proportionally to its maximum number of required slots $n^{(k)}_\textnormal{max}$. Hence, the number of slots allocated to all links in group $k$, denoted as $n^{(k)}$, can be calculated as
\begin{equation} 
n^{(k)}={\Biggl\lfloor}\frac{ n^{(k)}_\textnormal{max} }{ \sum_{k} n^{(k)}_\textnormal{max} }{\cdot}N{\Biggr\rfloor},
\label{eqnAllSlot} 
\end{equation}
where $\lfloor . \rfloor$ represents the floor function. Pseudo code of the time resource allocation algorithm is summarized in Algorithm~\ref{Ch3_Alg2}. In the algorithm, firstly, the required number of slots of all links scheduled in each group, $n_{i}$, are collected, as indicated in line~\ref{Ch3_Alg2:1}--\ref{Ch3_Alg2:3}. Then, for each group $k$, the maximal number of required slots, $n^{(k)}_\textnormal{max}$, are calculated as in line~\ref{Ch3_Alg2:4}. Based on this, the actual allocatable number of slots are determined for all groups, as indicated in line~\ref{Ch3_Alg2:5}--\ref{Ch3_Alg2:9}. As each link can only be scheduled into one group, the computational complexity of the algorithm is $\mathcal{O}(|\mathcal{V}_\textnormal{C}|)$.
\begin{algorithm}[tbp]
	\renewcommand\baselinestretch{1}\selectfont
	\SetAlgoLined
	\KwIn{Set of scheduled links in group $\mathcal{V}^{k}$, number of required slot $n_{i}$}
	\KwOut{Slot allocation policy $\bm{n}$}
	\begin{itemize}
		{\item $n^{(k)}_\textnormal{max}$: Maximum number of slots of links in group $k$}
		{\item $n^{(k)}$: Allocated number of slots for links in group $k$}
		{\item $k$: Iterator (index of group) }
	\end{itemize}
	\textit{Initialization}:
	$k=0$ \\
	\Begin{
		\everypar={\nl}
		\While{$k \neq K$}{
			$k=k+1$\;
			\ForEach{\textnormal{link} $i \in \mathcal{V}^{k}$ }{ \label{Ch3_Alg2:1}
				Get the number of required slots $n_{i}$\; \label{Ch3_Alg2:2}
			} \label{Ch3_Alg2:3}
			$n^{(k)}_\textnormal{max}=\max\limits_{i \in \mathcal{V}^{k}}{n_{i}}$\; \label{Ch3_Alg2:4}
		}
		$k=0$\;
		\While{$k \neq K$}{ \label{Ch3_Alg2:5}
			$k=k+1$\; \label{Ch3_Alg2:6}
			$n^{(k)}={\Bigl\lfloor}\frac{ n^{(k)}_\textnormal{max} }{ \sum\limits_{k} \label{Ch3_Alg2:7} 
				n^{(k)}_\textnormal{max} }{\cdot}N{\Bigr\rfloor}$\; \label{Ch3_Alg2:8}
		} \label{Ch3_Alg2:9}
		\textbf{Return} $n^{(k)}$ for each $k$\;
	}
	\caption{Slot Allocation Algorithm}\label{Ch3_Alg2}
\end{algorithm}

\subsection{Power Allocation Algorithm}\label{Ch3_Sec3_SubSec3}
By enabling spatial multiplexing, at some nodes (BS, AP), there will be multiple links to be simultaneously transmitted. Hence, power control is required at these nodes to fulfill the power splitting in P2M transmission situation. Specifically, after link scheduling and slot allocation, constraints~\eqref{eqnschd}--\eqref{eqnslot} are satisfied and consequently the optimization problem~\ref{Ch3_prb1} in~\eqref{eqnopt} can be reformed as
\begin{problem}
	(Power allocation optimization)
	\begin{align*}
	\underset{\bm{p}}\max &\null \quad \sum_{i \in \mathcal{V}^{k}} \sum_{k=1}^{K} r_{i} , \\
	\textnormal{s.t.} &\null \quad \sum_{i \in \mathcal{M}_{s,k}} p_{i} \leq P_\textnormal{max}. \numberthis \label{eqnoptP}
	\end{align*}\label{Ch3_prb2}
\end{problem}
Here, in~\eqref{eqnoptP}, the objective function first summarizes the achievable channel capacity, instead of data rates, as the slot allocation policy has been determined, of all links scheduled in group $\mathcal{V}^{k}$ (i.e., $\sum_{i \in \mathcal{V}^{k}}$) and further summarizes all groups of a frame (i.e., $\sum_{k=1}^{K}$). As the power allocation of links in one group is independent of that in the others, and the total transmission power of BS/AP remains unchanged for each group, the above maximization problem~\ref{Ch3_prb2} provided by~\eqref{eqnoptP} can be further relaxed as 
\begin{problem}
	(Relaxed power allocation optimization)
	\begin{align*}
	\underset{\bm{p}}\max &\null \quad \sum_{i \in \bm{M}_{s,k}} \log \left(1+\frac{g_i l_i^{-1}}{\eta} p_{i}\right) , \\
	\textnormal{s.t.} &\null \quad \sum_{i \in \mathcal{M}_{s,k}} p_{i} \leq P_\textnormal{max}. \numberthis \label{eqnoptPrlx}
	\end{align*}\label{Ch3_prb3}
\end{problem}
Here, we assume that the total system bandwidth $B$ of node $s$ is equally averaged to all links in $\mathcal{M}_{s,k}$, i.e.,
\begin{equation}
b_i = \frac{B}{|\mathcal{M}_{s,k}|}. \label{eqnbw}
\end{equation}
Hence, the term $b_i$ is eliminated in~\eqref{eqnoptPrlx} from~\eqref{eqnSINR}. Moreover, the term $\sum_{j}p_j g_j l_j^{-1}$ that describes the experienced interference power of link $i$ is also eliminated in~\eqref{eqnoptPrlx}, as all the other links scheduled in group $V^k$ are isolated from link $i$, namely these links lead to interference power that is less than the threshold $\sigma$, and are allowed to be simultaneously transmitted with link $i$ under tolerable interference level. Nevertheless, we still calculate the actual SINR of link $i$ for performance evaluation in Section~\ref{Ch3_Sec5} as addressed in Section~\ref{Ch3_Sec1_Ctrbt}.

Denoting $\frac{g_i l_i^{-1}}{\eta}$ as $\gamma_{i}$, which we refer to as channel quality, the relaxed optimization problem~\ref{Ch3_prb3} provided by~\eqref{eqnoptPrlx} can be solved by the following theorem (known as water-filling solution).
\begin{theorem}
	The optimal solution of problem~\ref{Ch3_prb3} is given by
	\begin{equation}
	p_{i}^{*} = \max\left\{ \frac{1}{\phi^{*}} - \frac{1}{\gamma_{i}}, 0\right\},
	\label{eqnoptPS}
	\end{equation}
	where $p_{i}^{*}$ is the optimal transmission power allocated to link $i$, and the optimal Lagrangian multiplier $\phi^{*}$ is given by
	\begin{equation}
	\sum\limits_{i \in \mathcal{M}_{s,k}}{\max\left\{ \frac{1}{\phi^{*}} - \frac{1}{\gamma_{i}}, 0\right\}} = P_\textnormal{max}.
	\label{eqnLGRG}
	\end{equation}
	\label{Ch3_Thm3}
\end{theorem}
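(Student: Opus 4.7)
The plan is to treat Problem~\ref{Ch3_prb3} as a standard concave maximization over a convex feasible set and apply the Karush–Kuhn–Tucker (KKT) conditions, since the structure is exactly that of the classical parallel‑channel power allocation for which water‑filling is known to be optimal.

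First I would verify that the problem satisfies a strong duality prerequisite. The objective $\sum_{i \in \mathcal{M}_{s,k}} \log(1+\gamma_i p_i)$ is a sum of concave functions of $p_i$ (each $\log(1+\gamma_i p_i)$ is concave because $\gamma_i>0$), so the objective is concave. The constraint set $\{\bm{p}:\sum_i p_i \leq P_\textnormal{max},\ p_i \geq 0\}$ is a convex polytope, and Slater's condition holds trivially (take any $p_i$ slightly positive with sum below $P_\textnormal{max}$). I would then note that since $\log(1+\gamma_i p_i)$ is strictly increasing in $p_i$, the budget constraint must be active at the optimum; otherwise one could increase some $p_i$ and strictly improve the objective. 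Thus $\sum_i p_i^{*} = P_\textnormal{max}$.

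Next I would set up the Lagrangian explicitly with a multiplier $\phi \geq 0$ for the sum constraint and multipliers $\mu_i \geq 0$ for the nonnegativity constraints,
\begin{equation*}
\mathcal{L}(\bm{p},\phi,\bm{\mu}) \;=\; \sum_{i \in \mathcal{M}_{s,k}} \log(1+\gamma_i p_i) \;-\; \phi\!\left(\sum_{i \in \mathcal{M}_{s,k}} p_i - P_\textnormal{max}\right) \;+\; \sum_{i \in \mathcal{M}_{s,k}} \mu_i p_i.
\end{equation*}
Writing the stationarity condition $\partial \mathcal{L}/\partial p_i = 0$ yields $\gamma_i/(1+\gamma_i p_i) = \phi - \mu_i$. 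Combining this with complementary slackness $\mu_i p_i = 0$ splits into two cases: if $p_i^{*}>0$, then $\mu_i=0$ and solving for $p_i^{*}$ gives $p_i^{*} = 1/\phi^{*} - 1/\gamma_i$, which requires $\phi^{*} < \gamma_i$; otherwise $p_i^{*} = 0$, which is consistent with $\mu_i = \phi^{*} - \gamma_i \geq 0$, i.e.\ $1/\phi^{*} \leq 1/\gamma_i$. Both cases collapse into the single expression $p_i^{*} = \max\{1/\phi^{*} - 1/\gamma_i,\,0\}$, establishing \eqref{eqnoptPS}. Finally, substituting this form into the active budget constraint $\sum_i p_i^{*} = P_\textnormal{max}$ yields \eqref{eqnLGRG}, which implicitly determines a unique $\phi^{*}>0$ because the left‑hand side is continuous and strictly decreasing in $\phi$ on $(0,\max_i \gamma_i]$, ranging from $+\infty$ down to $0$.

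The main obstacle, although mild, is justifying that the KKT conditions are not merely necessary but also sufficient here, and that $\phi^{*}$ exists and is unique. Sufficiency follows from concavity of the objective together with affinity of the constraints, so any KKT point is globally optimal; existence and uniqueness of $\phi^{*}$ follow from the monotonicity and range argument above. No other subtleties arise because the feasible set is compact and the objective continuous, guaranteeing an optimizer, which must then satisfy KKT and hence the water‑filling form.
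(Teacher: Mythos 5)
Your proposal is correct and follows essentially the same route as the paper's proof in Appendix~\ref{Ch3_Sec7_SubSec4}: verifying Slater's condition, writing the KKT conditions, arguing the power budget is active at the optimum, and splitting on complementary slackness to obtain the water-filling form. Your added remark on existence and uniqueness of $\phi^{*}$ via monotonicity of the left-hand side of \eqref{eqnLGRG} is a small but welcome refinement that the paper omits.
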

\begin{proof}
	The proof is provided in Appendix~\ref{Ch3_Sec7_SubSec4}.
\end{proof}
The pseudo code of the power allocation algorithm is presented in Algorithm~\ref{Alg3}. In the algorithm, firstly, the channel quality $\gamma_{i}$ is sorted in a descending order, as indicated in line~\ref{Ch3_Alg3:1}. The reason behind this falls into the fact that according to~\eqref{eqnoptPS}, once the optimal Lagrangian multiplier $\phi^{*}$ is determined, all links with $\gamma_{i} \leq \phi^{*}$ will be allocated to zero power and do not contribute in the calculation of the optimal Lagrangian multiplier $\phi^{*}$ in~\eqref{eqnLGRG}. With the descending $\gamma_{i}$, the corresponding Lagrangian multiplier $\phi_i$ of each link is derived as indicated in line~\ref{Ch3_Alg3:2}--\ref{Ch3_Alg3:4}. Then, the index of the optimal Lagrangian multiplier $m$ is acquired by finding the largest Lagrangian multiplier that is greater than the reciprocal of channel quality the corresponding link, as indicated in line~\ref{Ch3_Alg3:5}--\ref{Ch3_Alg3:7}. Ultimately, the optimal Lagrangian multiplier $\phi^{*}$ is determined in line~\ref{Ch3_Alg3:8} and the transmission power $p_{i}$ is allocated to each link as indicated in line~\ref{Ch3_Alg3:9}--\ref{Ch3_Alg3:11}. Similar to the slot allocation algorithm, the computational complexity of the power allocation algorithm is also $\mathcal{O}(|\mathcal{V}_\textnormal{C}|)$, as each link can only be scheduled into one group.

\begin{algorithm}[tbp]
	\renewcommand\baselinestretch{1}\selectfont
	\SetAlgoLined
	\KwIn{Channel quality $\gamma_{i}$, total transmission power $P_\textnormal{max}$, and set of links $\mathcal{M}_{s,k}$ transmitted from sender $s$ in group $k$}
	\KwOut{Power allocation policy $\bm{p}$}
	\begin{itemize}
		{\item $\phi_{i}$: Lagrangian multiplier for each link $i$}
		{\item $m$: Index of optimal Lagrangian multiplier}  
		{\item $\phi^{*}$: Optimal Lagrangian multiplier}
		{\item $p_{i}$: Allocated transmission power for link $i$}
	\end{itemize}
	\textit{Initialization}:
	$j=0$ \\
	\Begin{
		\everypar={\nl}
		Sort $\gamma_{i}$ in descending order\; \label{Ch3_Alg3:1}
		\ForEach{\textnormal{link} $i \in \mathcal{M}_{s,k}$}{ \label{Ch3_Alg3:2}
			$\frac{1}{\phi_{i}} = \frac{P_\textnormal{max} + \sum_{j=1}^{i} \frac{1}{\gamma_{j}} } {i}$\; \label{Ch3_Alg3:3}
		} \label{Ch3_Alg3:4}
		\ForEach{$\phi_{i}$}{ \label{Ch3_Alg3:5}
			Find $m$ where $\frac{1}{\phi_{m}} > \frac{1}{\gamma_{m}}$ and $\frac{1}{\phi_{m}} \leq \frac{1}{\gamma_{m+1}}$\; \label{Ch3_Alg3:6}
		} \label{Ch3_Alg3:7}
		$\frac{1}{\phi^{*}} = \frac{P_\textnormal{max} + \sum_{i=1}^{m} \frac{1}{\gamma_{i}} } {m}$\; \label{Ch3_Alg3:8}
		\ForEach{\textnormal{link} $i \in \mathcal{M}_{s,k}$}{ \label{Ch3_Alg3:9}
			$p_{i}=
			\begin{cases}
			\frac{1}{\phi^{*}} - \frac{1}{\gamma_{i}}, & \textnormal{for}\; i \in \{1,\dotsc,m\}\\
			0, & \textnormal{for}\; i \in \{m+1,\dotsc,|\mathcal{M}_{s,k}|\}\\
			\end{cases}$\; \label{Ch3_Alg3:10}
			
		} \label{Ch3_Alg3:11}
		\textbf{Return} $p_{i}$ for each link $i$\;
	}
	\caption{Power Allocation Algorithm}\label{Alg3}
\end{algorithm}

\section{Multihop Routing Algorithm Design for Path Selection}\label{Ch3_Sec4}
In the previous section, we assume that the multihop path connecting BS and UE is predefined, where the JSRA algorithm is designed based on fixed network topology. This assumption does not fully exploit the freedom given by a reconfigurable mm-wave backhaul, which can be flexibly selected as an intermediate hop for the path connecting BS and UE, and the selection depends on real-time network load, i.e., achievable data rate that varies for different paths consisting of different intermediate hops. In this section, we propose a DR algorithm for path selection, where the optimal path is greedily generated for each UE in terms of achievable data rate taking into account the data rates achieved by applying the JSRA algorithm to the existing UEs in the network. 

\subsection{Routing Algorithm Design}\label{Ch3_Sec4_SubSec1}
We introduce our DR algorithm as follows. The network is firstly established as link graph\footnote{As mentioned in Section~\ref{Ch3_Sec2_SubSec1}, UEs are associated either directly with the BS or with the geographically closest AP and connected to the BS via multihop. Depending on different network layouts specified by various use cases, this assumption can be further modified as UEs are associated with the network node (BS/AP) from which the strongest signal power is detected.} illustrated in Fig.~\ref{Ch3_Fig_Graph}. With the network topology, the achievable data rate of each link is calculated by the JSRA algorithm. Then, for all UEs in the network that requires multihop path connecting to the BS, the optimal path is greedily selected considering the existed UE(s) in the network. When a path has been determined for a UE, the network will be updated where the achievable data rate of each link is recalculated. More details of the path selection algorithm and network update are elaborated in Section~\ref{Ch3_Sec4_SubSec2} and Section~\ref{Ch3_Sec4_SubSec3}, respectively. The routing algorithm terminates when the path of all UEs are generated. A flow chart of the proposed DR algorithm is illustrated in Fig~\ref{Ch3_Fig_FC}.

\subsection{Path Selection Algorithm}\label{Ch3_Sec4_SubSec2}
For a multihop transmission between BS and UE, a source and destination node pair, $(s,d)$, is selected as either (BS,UE) or (UE,BS) for downlink or uplink transmission, respectively. Further, the achievable data rate is used as edge weight. Then, we provide the pseudo code of the proposed path selection algorithm in Algorithm~\ref{Ch3_Alg4}. 

In the beginning, both the set of traversed node $\mathcal{V}_\textnormal{t}$ and the set of next node $\mathcal{V}_\textnormal{n}$ are initialized as empty. The set of current node  $\mathcal{V}_\textnormal{c}$ contains only source node $s$, and for each node $v$ except $s$, the weight of path starting from $s$ and ending at it, referred to as $w(v)$ and defined as the minimum edge weight along the path (the achievable rate of the path), is initialized as $\infty$, which means that in the beginning all the nodes are not connected. Then, the algorithm picks every $v$ in the set of current node $\mathcal{V}_\textnormal{c}$ (starting from $s$), and finds all its neighboring node $u$ in the set of non-traversed node $\mathcal{V}\backslash \mathcal{V}_\textnormal{t}$, as described in line~\ref{Ch3_Alg4:13}--\ref{Ch3_Alg4:16}. Afterwards, there are three criteria for judging whether $u$ can be added to the path from $s$ to $v$ as the next node of $v$:
\begin{itemize}
	\item The weight of $u$ is infinity (the node has not been traversed) or adding $u$ to the path will not change the weight of path $w(u)$, as indicated in line~\ref{Ch3_Alg4:1}--\ref{Ch3_Alg4:4}. In this case, the path from $s$ to $v$ then to $u$ is one of the optimal path from $s$ to $u$ in terms of maximizing achievable data rate. 
	\item Adding the node to the path will increase the weight of path $w(u)$. In this case, the optimal path from $s$ to $u$ in terms of maximizing achievable data rate goes exclusively through $v$, as by adding the edge $(v,u)$, the minimal weight of the path is increased. Therefore, the other paths from $s$ to $u$ through other parent nodes of $u$ in the set $\mathcal{P}(u)$ should be eliminated and $v$ is the only parent node of $u$, as described in line~\ref{Ch3_Alg4:5}--\ref{Ch3_Alg4:8}.
	\item Otherwise, adding $u$ to the path may decrease the weight of path $w(u)$, which is not the desired result for maximizing the weight of the path and will not be considered.
\end{itemize}	
When the sets of currents nodes $\mathcal{V}_\textnormal{c}$ has been traversed, it will be updated as the set of next nodes, which are generated by selecting the proper neighboring nodes of nodes in $\mathcal{V}_\textnormal{c}$ by the above criteria, as described in line~\ref{Ch3_Alg4:10}--\ref{Ch3_Alg4:12}. The algorithm terminates until all nodes in the network have been assigned a path from $s$.

\begin{algorithm}[tbp]
	\renewcommand\baselinestretch{1}\selectfont
	\SetAlgoLined
	\KwIn{Link graph $\mathcal{G}(\mathcal{V},\mathcal{E})$}
	\KwOut{The optimal path $\psi^{*}$}
	\begin{itemize}
		{\item $s$: Source node}
		{\item $d$: Destination node}
		{\item $\mathcal{V}_\textnormal{t}$: Set of traversed nodes}
		{\item $\mathcal{V}_\textnormal{c}$: Set of current nodes}
		{\item $\mathcal{V}_\textnormal{n}$: Set of next nodes}
		{\item $\mathcal{P}(v)$: Set of parent nodes of node $v$}
		{\item $w(v)$: Weight of path ending at node $v$}
		{\item $w_{v\text{-}u}$: Weight of edge $(v,u)$}
	\end{itemize}
	\textit{Initialization}:
	$\mathcal{V}_\textnormal{t} = \{\}, \, \mathcal{V}_\textnormal{n} = \{\}, \, \mathcal{V}_\textnormal{c} = \{s\}, \, w(v) = \infty, \, \forall v\in \mathcal{V}\backslash s$ \\
	\Begin{
		\everypar={\nl}
		\While{$\mathcal{V}_\textnormal{t} \neq \mathcal{V}$}{ \label{Ch3_Alg4:13}
			\ForEach{$v \in \mathcal{V}_\textnormal{c}$}{ \label{Ch3_Alg4:14}
				\ForEach{$u \in \mathcal{V}\backslash \mathcal{V}_\textnormal{t}$ \textnormal{that is neighbor of} $v$}{ \label{Ch3_Alg4:16}
					\uIf{$w(u)=\infty$ \textnormal{or} $w(u)=\min\{w_{v\text{-}u}, w(v)\}$}{ \label{Ch3_Alg4:1}
						$w(u)=\min\{w_{v\text{-}u},\, w(v)\}$\; \label{Ch3_Alg4:2}
						$\mathcal{P}(u)=\mathcal{P}(u) \cup \{v\}$\; \label{Ch3_Alg4:3}
						$\mathcal{V}_\textnormal{n}=\mathcal{V}_\textnormal{n} \cup \{u\}$\; \label{Ch3_Alg4:4}  
					}
					\uElseIf{$w(u)<\min\{w_{v\text{-}u},\, w(v)\}$}{ \label{Ch3_Alg4:5}
						empty $\mathcal{P}(u)$\; \label{Ch3_Alg4:6}
						$\mathcal{P}(u) = \{v\}$\; \label{Ch3_Alg4:7}
						$\mathcal{V}_\textnormal{n}=\mathcal{V}_\textnormal{n} \cup \{u\}$\; \label{Ch3_Alg4:8}
					}
					\Else{continue\; \label{Ch3_Alg4:9}
					}
				}
				$\mathcal{V}_\textnormal{t}=\mathcal{V}_\textnormal{t} \cup \{v\}$\; \label{Ch3_Alg4:15}
			}
			empty $\mathcal{V}_\textnormal{c}$\; \label{Ch3_Alg4:10}
			$\mathcal{V}_\textnormal{c}=\mathcal{V}_\textnormal{n} $\; \label{Ch3_Alg4:11}
			empty $\mathcal{V}_\textnormal{n}$\; \label{Ch3_Alg4:12}
		}
		\textbf{Return} The optimal path $\psi^{*}: s \rightarrow  \dotsc  \rightarrow  \mathcal{P}(\mathcal{P}(d)) \rightarrow \mathcal{P}(d) \rightarrow d$\;
	}
	\caption{Path Selection Algorithm}\label{Ch3_Alg4}
\end{algorithm}

\subsection{Update Network}\label{Ch3_Sec4_SubSec3}
After the optimal path between BS and one UE is generated, the achievable data rate of each link in the network needs to be updated. This procedure utilizes the proposed JSRA algorithm, where every time the path for one UE has been determined, the achievable data rates of all links in the network are obtained by~\eqref{eqnTP}, according to the scheduling policy $\bm{\delta}$, the slot allocation policy $\bm{n}$, and the power allocation policy $\bm{p}$, which are acquired by running the JSRA algorithm for all the existing UEs in the network. In this way, the objective of DR is reached, where for each user the optimal path is selected taking into account real-time network statistics (achievable data rate) and correspondingly the system performance is improved compared to the scenario of predefined routing.

\section{Numerical Evaluation and Discussion}\label{Ch3_Sec5}
In this section, we evaluate the performance of the proposed JSRA algorithm with and without the DR algorithm for the mm-wave HetNets with both downlink and uplink traffics. The system-level evaluation setup is described in Section~\ref{Ch3_Sec5_SubSec1}. For the evaluation, we first compare the proposed algorithms with some benchmark schemes for multiplexing and interference mitigation and also with the approach that achieves theoretical optimum, in terms of data rate and latency in Section~\ref{Ch3_Sec5_SubSec2} and Section~\ref{Ch3_Sec5_SubSec3}, respectively. Then, the impacts of frame structure and duplex mode on data rate are addressed in Section~\ref{Ch3_Sec5_SubSec4} and Section~\ref{Ch3_Sec5_SubSec5}, respectively. Finally, the capability of the proposed CG-MIS scheduling algorithm in improving system performance of P2P communications, for which we focus on data delivery of a vehicular platoon, is demonstrated in Section~\ref{Ch3_Sec5_SubSec6}.  

\subsection{Simulation Setup}\label{Ch3_Sec5_SubSec1}
We consider a HetNet deployed under a single Manhattan Grid~\cite{RappaportTC, YLiWCNC, YLiTWC}, where square blocks are surrounded by streets that are 200 meters long and 30 meters wide. As illustrated in Fig.~\ref{Ch3_Fig_Simltn}, one BS and nine APs, which are marked as a black circle with black cross and black triangles, respectively, are located at the crossroads. 100 UEs are uniformly dropped in the streets marked as small blue crosses. In addition, green arrows illustrate mm-wave access links while red arrow depicts mm-wave backhaul link, respectively. The adopted propagation and antenna model are explained in Section~\ref{Ch3_Sec2_SubSec2}. It is worth noting that UEs are assumed to be almost stationary so the pathloss and shadowing values are fixed during the simulation. The type of user traffic is set as full buffer, and the default duplex mode is assumed to be half-duplex. However, for different case studies addressed in Section~\ref{Ch3_Sec5_SubSec4} and Section~\ref{Ch3_Sec5_SubSec5}, the default traffic type and duplex mode can be modified to investigate the efficiency of applying the proposed algorithms to ubiquitous system configurations. Simulation samples are averaged over 1000 independent snapshots. The default system parameter values are summarized in Table~\ref{Ch3_Table_SysMod}. 
\definecolor{myblue}{RGB}{91,155,213}
\definecolor{mygreen}{RGB}{0,204,0}
\definecolor{myred}{RGB}{255,112,112}
\definecolor{mygray}{RGB}{165,165,165}
\definecolor{myblue}{RGB}{91,155,213}
\definecolor{myyellow}{RGB}{254,192,0}
\begin{figure}[tbp]
	\centering
	\subcaptionbox{%
		\label{Ch3_Fig_FC}
		Flow chart of DR algorithm.
	}[0.48\textwidth]
	{
		\tikzstyle{decision} = [diamond, draw=white, fill=myblue!75, 
		text width=7.5em, text centered, inner sep=0pt, minimum size=7.5em, text=black]
		
		\tikzstyle{block} = [rectangle, draw=white, fill=myblue!75, 
		text width=9em, text centered, rounded corners, minimum height=5.06em, minimum width=10em, text=black]
		
		\tikzstyle{line} = [draw, very thick, -latex']
		
		\tikzstyle{startover} = [ellipse, draw=white, fill=mygray!100, text centered, text width=5.5em, minimum size=5.5em, text=white]

			\resizebox{0.4\textwidth}{!}{
				\begin{tikzpicture}[node distance = 2cm, auto]
				\tikzstyle{every node}=[font=\normalsize]
				
				\node [startover, node distance=6cm] (initNW) {Initialize Network\\$i = 1$};
				\node [decision, below of=initNW, node distance=3.75cm] (judgeUE) {$i = |\mathcal{U}| ?$};
				\node [startover, right of=judgeUE, node distance=4.5cm] (end) {End};   
				\node [block, below of=judgeUE, node distance=3.5cm] (routing) {Path selection \\ for $\textnormal{UE}_i$};
				\node [block, below of=routing, node distance=2.75cm] (updateNW) {Update network};
				\node [block, below of=updateNW, node distance=2.75cm] (i++) {$i\texttt{++}$};
				
				\path [line] (initNW) -- (judgeUE);
				\path [line] (judgeUE) -- node [midway] {Yes} (end);
				\path [line] (judgeUE) -- node [midway] {No} (routing);
				\path [line] (routing) -- (updateNW);
				\path [line] (updateNW) -- (i++);
				\draw[line] (i++.west) -- ++ (-2,0) -- ++(0,9) --  (judgeUE.west);
				
				\end{tikzpicture}} 
			}
	\subcaptionbox{%
		\label{Ch3_Fig_Simltn}
		Simulation scenario with Manhattan Grid.
		}[0.48\textwidth]{
			\resizebox{0.4\textwidth}{!}{ 
				\begin{tikzpicture}
				\tikzset{cross/.style={cross out, draw=black, minimum size=2*(#1-\pgflinewidth), inner sep=0pt, outer sep=0pt},cross/.default={2pt}};
				\tikzset{UE/.style={cross out, draw=myblue, minimum size=2*(#1-\pgflinewidth), inner sep=0pt, outer sep=0pt, rotate=45},UE/.default={1.5pt}};
				
				\draw[color=black] (-0.069,-0.06) -- (0.069,-0.06) -- (0,0.06)-- (-0.069,-0.06);
				\draw[color=black] (-0.069+1,-0.06) -- (0.069+1,-0.06) -- (0+1,0.06)-- (-0.069+1,-0.06);
				\draw[color=black] (-0.069+2,-0.06) -- (0.069+2,-0.06) -- (0+2,0.06)-- (-0.069+2,-0.06);
				\draw[color=black] (-0.069,-0.06+1) -- (0.069,-0.06+1) -- (0,0.06+1)-- (-0.069,-0.06+1);
				\draw[color=black] (-0.069+2,-0.06+1) -- (0.069+2,-0.06+1) -- (0+2,0.06+1)-- (-0.069+2,-0.06+1);
				\draw[color=black] (-0.069,-0.06+2) -- (0.069,-0.06+2) -- (0,0.06+2)-- (-0.069,-0.06+2);
				\draw[color=black] (-0.069+1,-0.06+2) -- (0.069+1,-0.06+2) -- (0+1,0.06+2)-- (-0.069+1,-0.06+2);
				\draw[color=black] (-0.069+2,-0.06+2) -- (0.069+2,-0.06+2) -- (0+2,0.06+2)-- (-0.069+2,-0.06+2);
				\draw[color=black] (1, 1) circle (0.075);
				
				\draw (-0.1,-0.1) -- (-0.1,2.1) -- (2.1,2.1) -- (2.1,-0.1) -- cycle;
				
				\foreach \x in {0.1,1.1}{
					\foreach \y in {0.1,1.1}{
						\draw (\x,\y) -- (\x+0.8,\y) -- (\x+0.8,\y+0.8) -- (\x,\y+0.8) -- cycle;
					}
				}
				
				\draw (1,1) node[cross]{};
				
				\draw(0,0.67)node[UE]{};
				\draw(0.6,1.04)node[UE]{};
				\draw(1.95,1.47)node[UE]{};
				\draw(1.5,0)node[UE]{};
				\draw(1.33,2)node[UE]{};
				\draw(1.43,0.96)node[UE]{};
				\draw(1.05,0.675)node[UE]{};
				\draw(0.05,1.25)node[UE]{};
				\draw(0.6,1.95)node[UE]{};
				\draw(0.7,0)node[UE]{};
				\draw(2.03,0.7)node[UE]{};
				\draw [-latex',draw=mygreen] (0.925,1) -- node[right] {} (0.67,1.04);
				\draw [-latex',draw=mygreen] (1.26,2) -- node[right] {} (1.075,2);
				\draw [-latex',draw=myred] (1,1.925) -- node[right] {} (1,1.075);
				
				\end{tikzpicture}} 
			}
	\caption{Illustration of (a) flow chart of DR algorithm and (b) simulation scenario with Manhattan Grid.}\label{Ch3_Fig_DR-Simltn}
\end{figure}

\subsection{Case Studies: Data Rate}\label{Ch3_Sec5_SubSec2}
In Fig.~\ref{Ch3_Fig_DR:a}, we plot the simulation results of the data rate for different schemes in the considered HetNet, versus the type of data rate. Here, the edge data rate is defined as the 5-th percentile point of the cumulative distribution function (CDF) of data rates. In particular, TDMA and eICIC~\cite{Lopez-Perez} schemes are selected as benchmark schemes for the performance comparison with the proposed algorithms. As shown in Fig.~\ref{Ch3_Fig_DR:a}, the proposed JSRA algorithm provides considerable improvement in both edge and average data rate, thanks to the spatial multiplexing gain achieved by enabling simultaneous transmission, compared to the benchmark schemes. Moreover, applying the DR algorithm to the JSRA algorithm taking into account real-time network statistics further improves the data rate compared to the JSRA algorithm with predefined routing. In particular, the proposed JSRA algorithm with the DR algorithm closely approximates to the optimal data rate, which is achieved by a general dynamic policy approach (DPP) for the network utility maximization problem~\cite{Neely}, where at each time slot a maximum weighted sum rate problem with respect to single-hop instantaneous rate is solved, and weights are recursively updated in terms of the flow queues at each node.
\begin{figure}[tbp]
	\centering
	\begin{subfigure}{0.48\textwidth}
		\includegraphics[width=\textwidth]{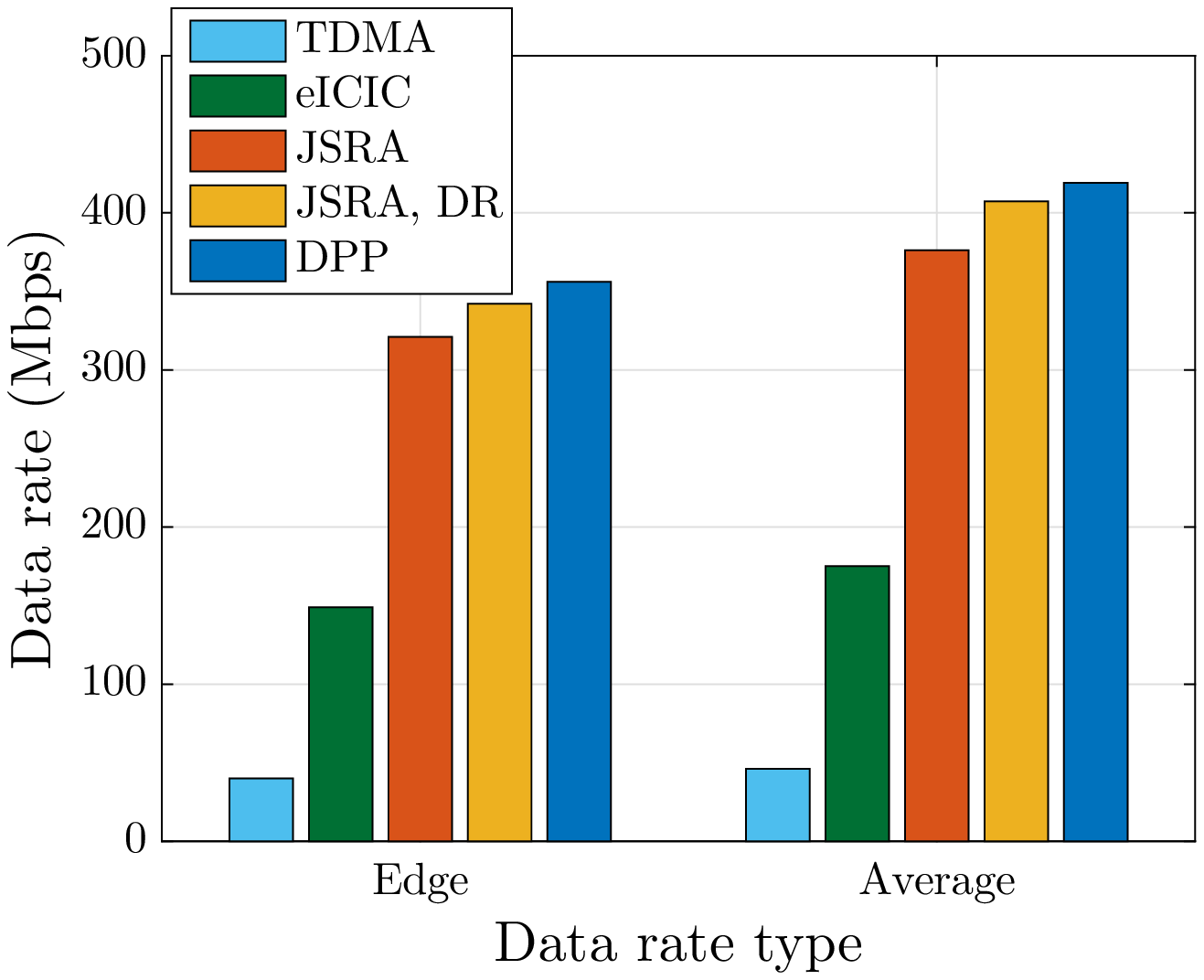}
		\caption{Data rate type}
		\label{Ch3_Fig_DR:a}
	\end{subfigure}
	\begin{subfigure}{0.48\textwidth}
		\includegraphics[width=\textwidth]{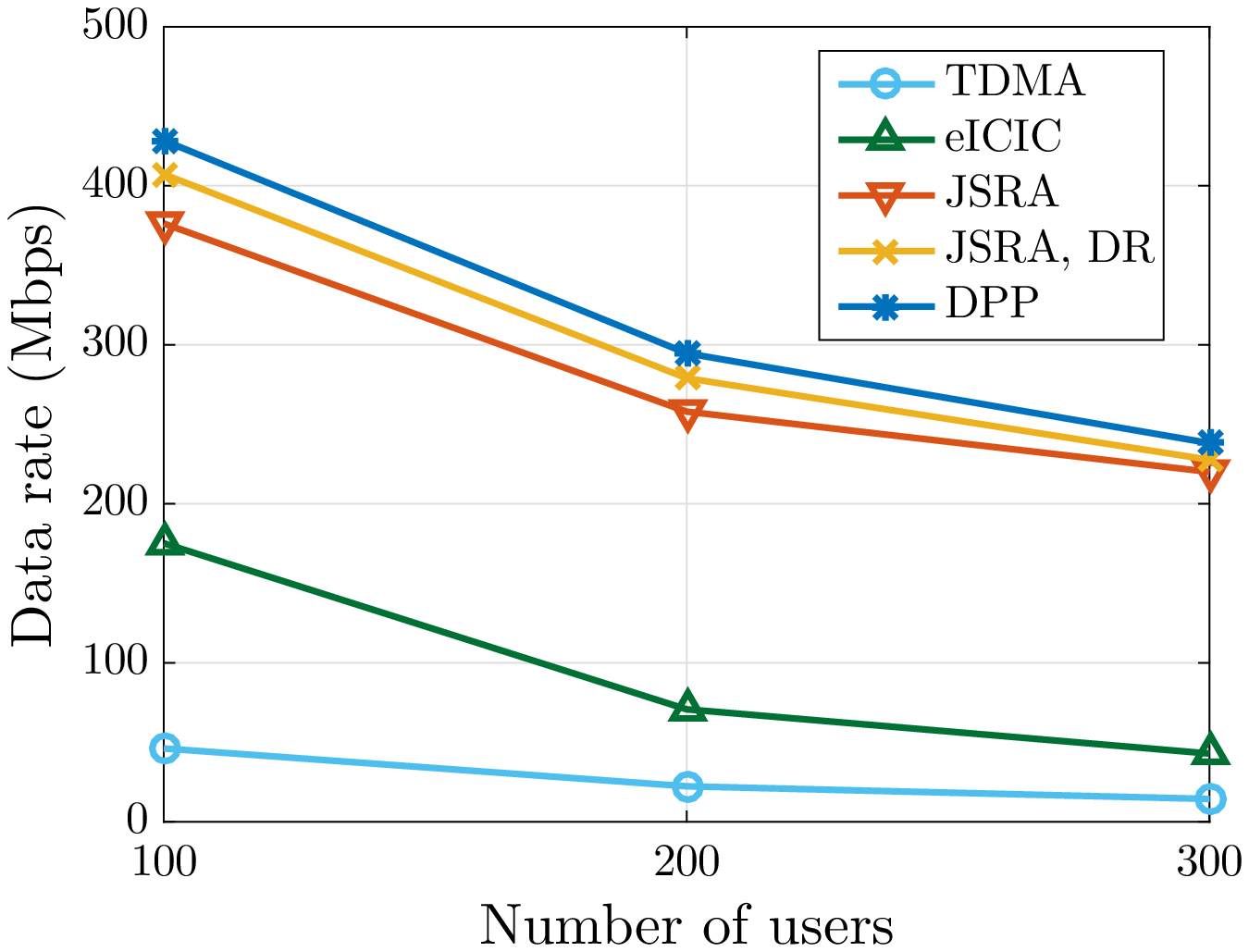}
		\caption{Number of users}
		\label{Ch3_Fig_DR:b}
	\end{subfigure}
	\caption{Performance of data rate for different schemes are compared in different (a) data rate type and (b) number of users.}\label{Ch3_Fig_DR}
\end{figure}

In Fig.~\ref{Ch3_Fig_DR:b}, we plot the simulation results of the data rate for different schemes in the considered HetNet, versus the number of users in the network. On the one hand, as expected, increasing the number of users reduces the average data rate due to limited resource. However, the proposed JSRA algorithm (without and with the DR algorithm) still provides significant improvement compared to the benchmark schemes for different number of users. On the other hand, with the increased user density, the gap between the optimum (DPP) and the proposed algorithms shrinks. The reason behind this falls into the fact that when the number of users grows, the allocatable resource to each link in all schemes is limited and becomes the dominant factor in determining the data rate.

\subsection{Case Studies: Latency}\label{Ch3_Sec5_SubSec3}
While the DPP scheme is (asymptotically) optimal for maximizing the data rate and outperforms the proposed JSRA algorithm (without and with the DR algorithm), due to the gain from slot-wise scheduling and resource allocation compared to the frame-wise approach of the JSRA algorithm, it is not optimal when we wish to guarantee other system performance, e.g.\ latency. In other words, DPP scheme achieves high data rate by sacrificing some flow, which might never be scheduled (always stay in flow queue), to maintain the stability of each transmission node in terms of the balance between influx and outflow bits. 
\begin{figure}[tbp]
	\centering
	\begin{subfigure}{0.48\textwidth}
		\includegraphics[width=\textwidth]{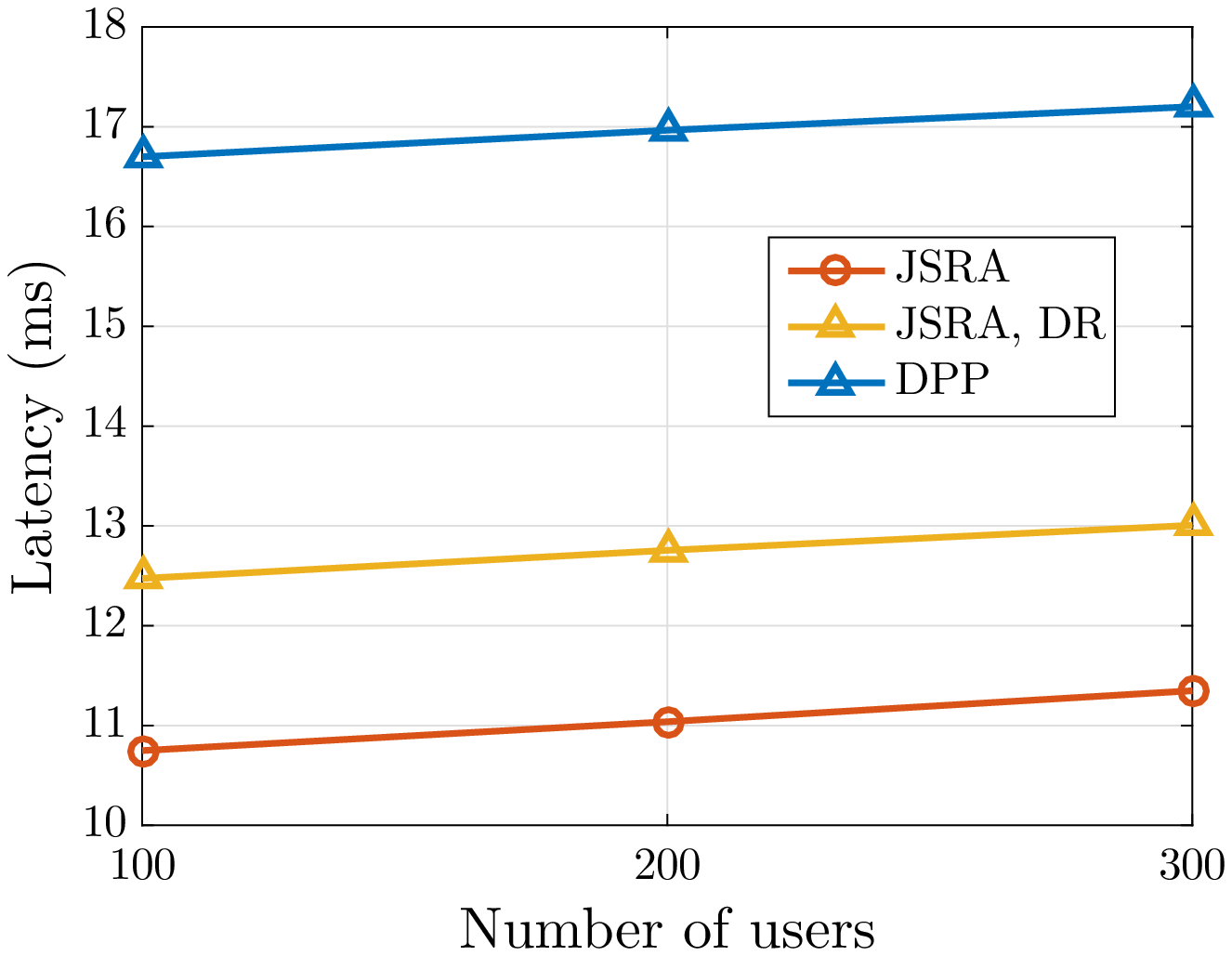}
		\caption{Latency}
		\label{Ch3_Fig_LFS:a}
	\end{subfigure}
	\begin{subfigure}{0.48\textwidth}
		\includegraphics[width=\textwidth]{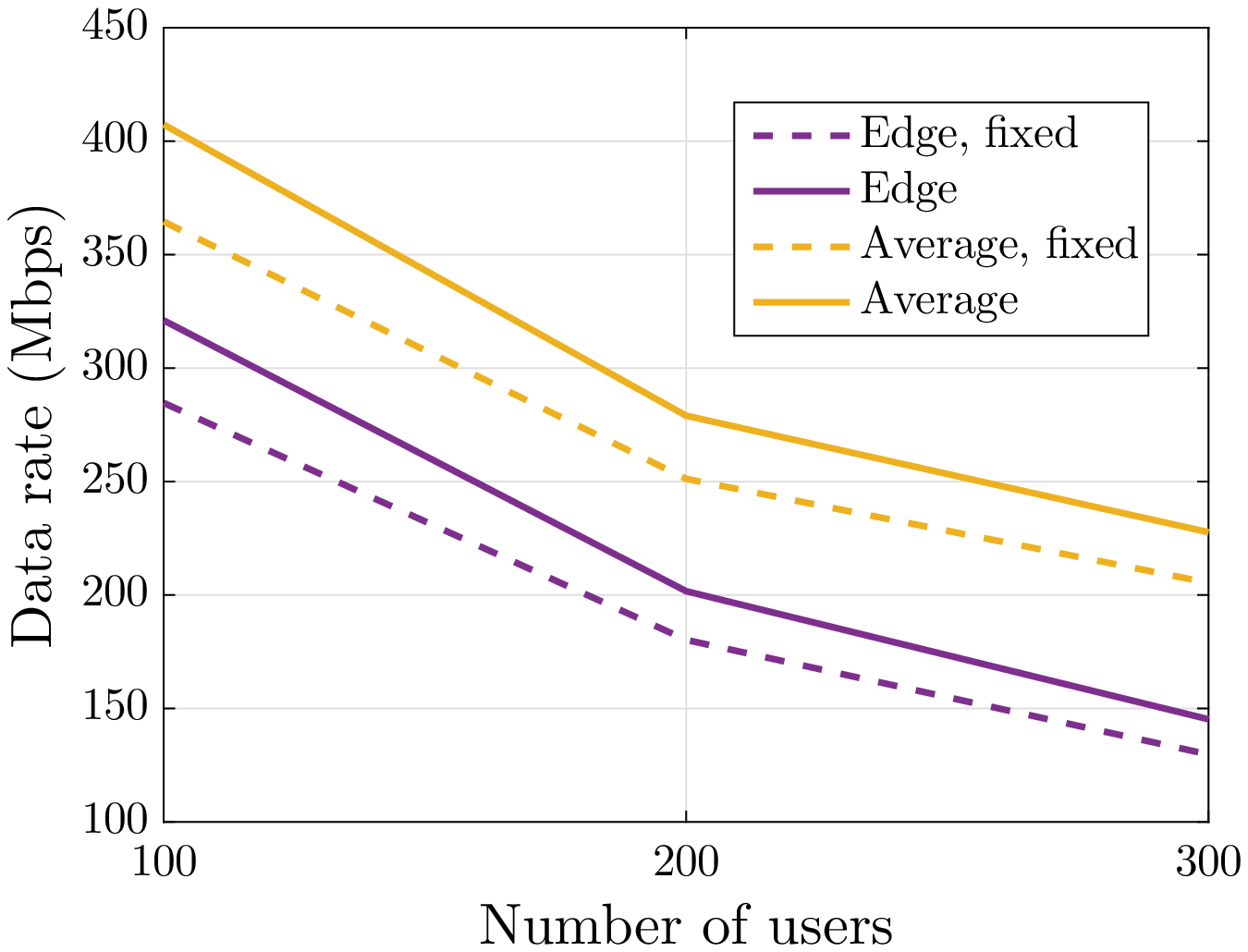}
		\caption{Data rate}
		\label{Ch3_Fig_LFS:b}
	\end{subfigure}
	\caption{Performance of latency for different schemes are compared in different number of users in (a). Performance of data rate for different frame structures are compared in different number of users in (b).}\label{Ch3_Fig_LFS}
\end{figure}

In Fig.~\ref{Ch3_Fig_LFS:a}, we plot the simulation results of the latency for different schemes in the considered HetNet, versus the number of users in the network. As stated above, DPP scheme experiences higher latency compared to the JSRA algorithm (without and with the DR algorithm). According to the path selection criteria of the DR algorithm, a link that achieves higher data rate is more likely to be picked as an intermediate hop for a path, which may lead to the situation that the optimal path in terms of maximizing achievable data rate consists of a large number of hops which eventually cause higher latency compared to fixed routing with limited number of hops. 

\subsection{Case Studies: Frame Structure}\label{Ch3_Sec5_SubSec4}
As the actual user demand, which is represented by the required transmission slot, varies across different frames, the results of the slot allocation policy $\bm{n}$ for different frames may also be diverse. This brings an additional advantage for frame structure design, where the ``switch point'' for different node (BS/AP/UE), defined as the percentage of the number of downlink slots in each frame, can be flexibly adjusted according to actual allocated slots to downlink and uplink transmissions. 

In Fig.~\ref{Ch3_Fig_SP}, we plot the simulation results of switch points for four APs in the considered HetNet, versus the index of successive frames. The trend of switch point curves for the other BS/AP are expected to be similar to the depicted ones. As anticipated, the switch points of the selected APs fluctuate in the vicinity of the baseline ($50\%$, which means the number of downlink and uplink slots are the same) for both algorithms, which shows the efficiency the proposed algorithm in allocating slots to fulfill actual user demands.
\begin{figure}[tbp]
	\centering
	\includegraphics[width=0.6\textwidth]{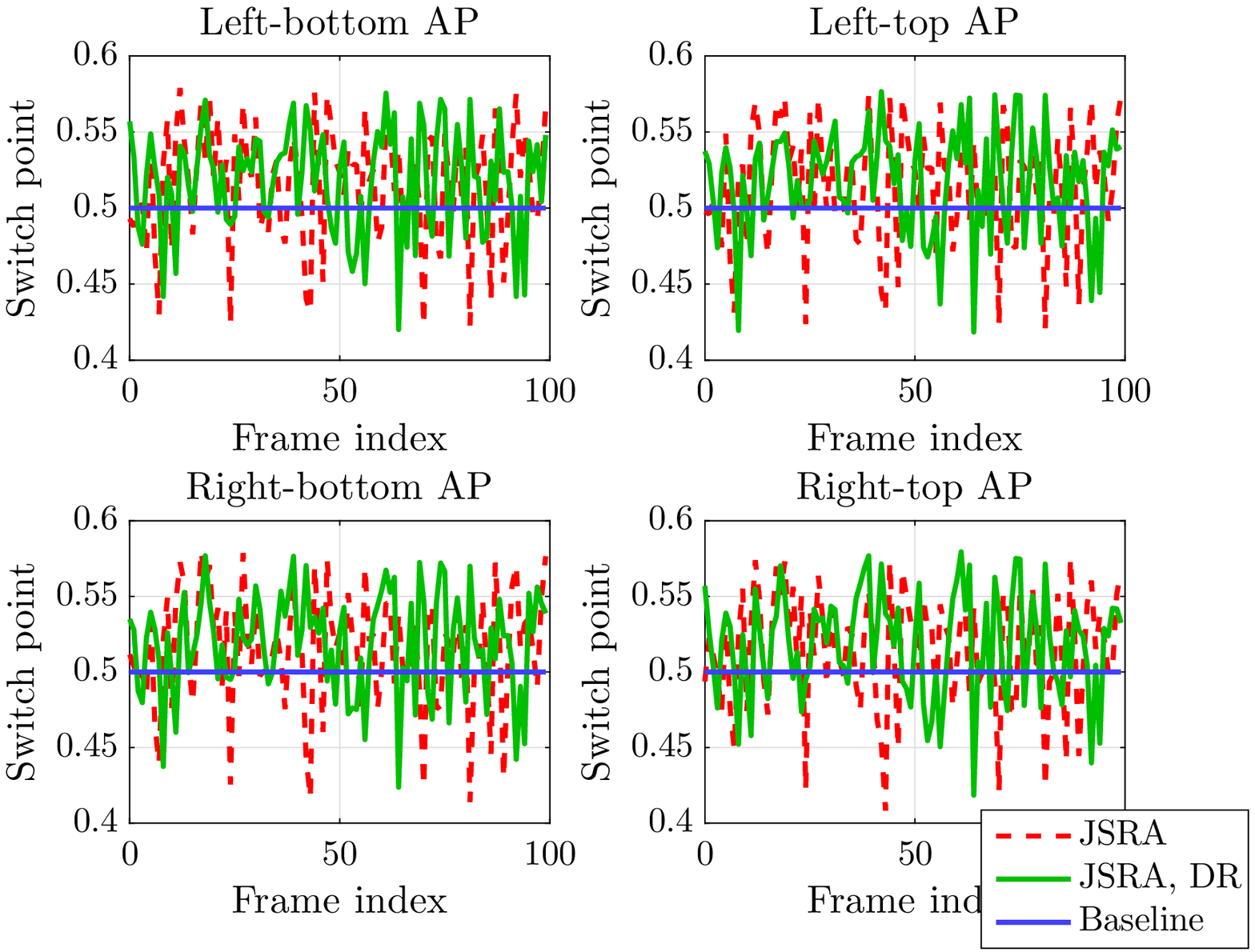}
	\caption{Switch point for different APs are illustrated across different frames.}\label{Ch3_Fig_SP}
\end{figure}

In Fig.~\ref{Ch3_Fig_LFS:b}, we plot the simulation results of the data rate for JSRA algorithm with frame structure of flexible switch point and with frame structure of fixed switch point ($50\%$) in the considered HetNet, versus the number of users in the network. Here, the scheduling policy $\bm{\delta}$ acquired by the CG-MIS algorithm for the above two cases are the same, but the slot allocation in the fixed switch point case is restricted to allocating total number of slots in each frame equally to all downlink and uplink links (namely $50\%$ -- $50\%$). In the figure, we notice that the proposed algorithm, owing to the capability of the flexible adjustment of downlink/uplink slots, yields both higher edge and average data rates compared to that of fixed switch point.

\subsection{Case Studies: Duplex Mode}\label{Ch3_Sec5_SubSec5}
In addition to the default half-duplex mode, in Fig.~\ref{Ch3_Fig_FullD:a} and Fig.~\ref{Ch3_Fig_FullD:b}, we plot the simulation results of the data rate for different duplex modes in the considered HetNet, versus the type of data rate and the number of users, respectively. Here, the perfect full-duplex refers to the case that a reception link is isolated from the simultaneously scheduled transmission link of the same node without interference, while the interfered (abbreviated as interf.\ in the figures) case incorporates a $\SI{-110}{\decibel}$ interference at the reception link caused by the simultaneously scheduled transmission link at the same node. The full-duplex modes are further classified into only enabling full-duplex mode at AP (green and red bars in Fig.~\ref{Ch3_Fig_FullD:a} and Fig.~\ref{Ch3_Fig_FullD:b}) and at both AP and BS (yellow and blue bars). Results show that by relaxing the scheduling criterion for simultaneous transmission (from half-duplex to full-duplex, from full-duplex at only AP to at both AP and BS, etc.), the data rate is gradually improved.
\begin{figure}[tbp]
	\centering
	\begin{subfigure}{0.48\textwidth}
		\includegraphics[width=\textwidth]{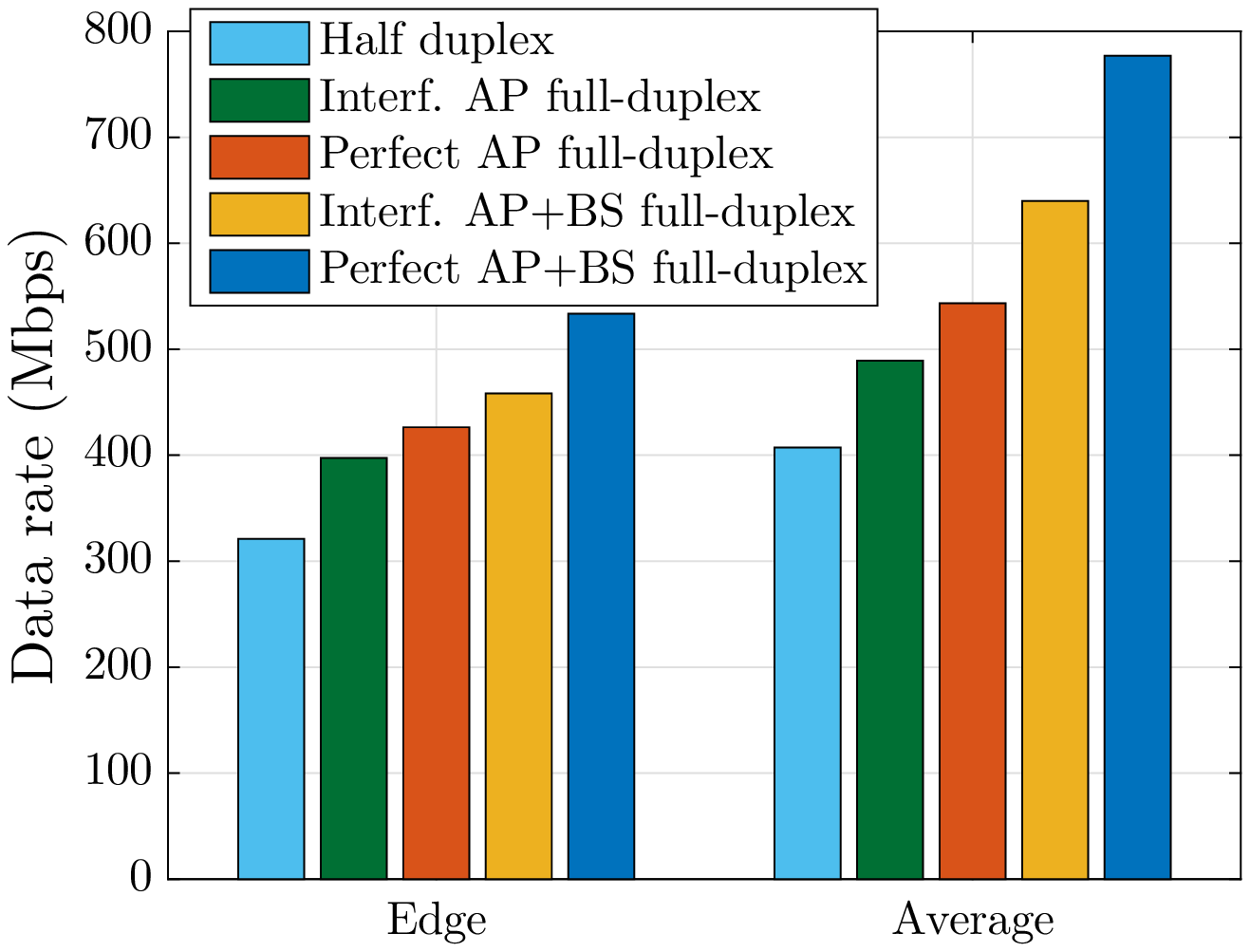}
		\caption{Data rate type}
		\label{Ch3_Fig_FullD:a}
	\end{subfigure}
	\begin{subfigure}{0.48\textwidth}
		\includegraphics[width=\textwidth]{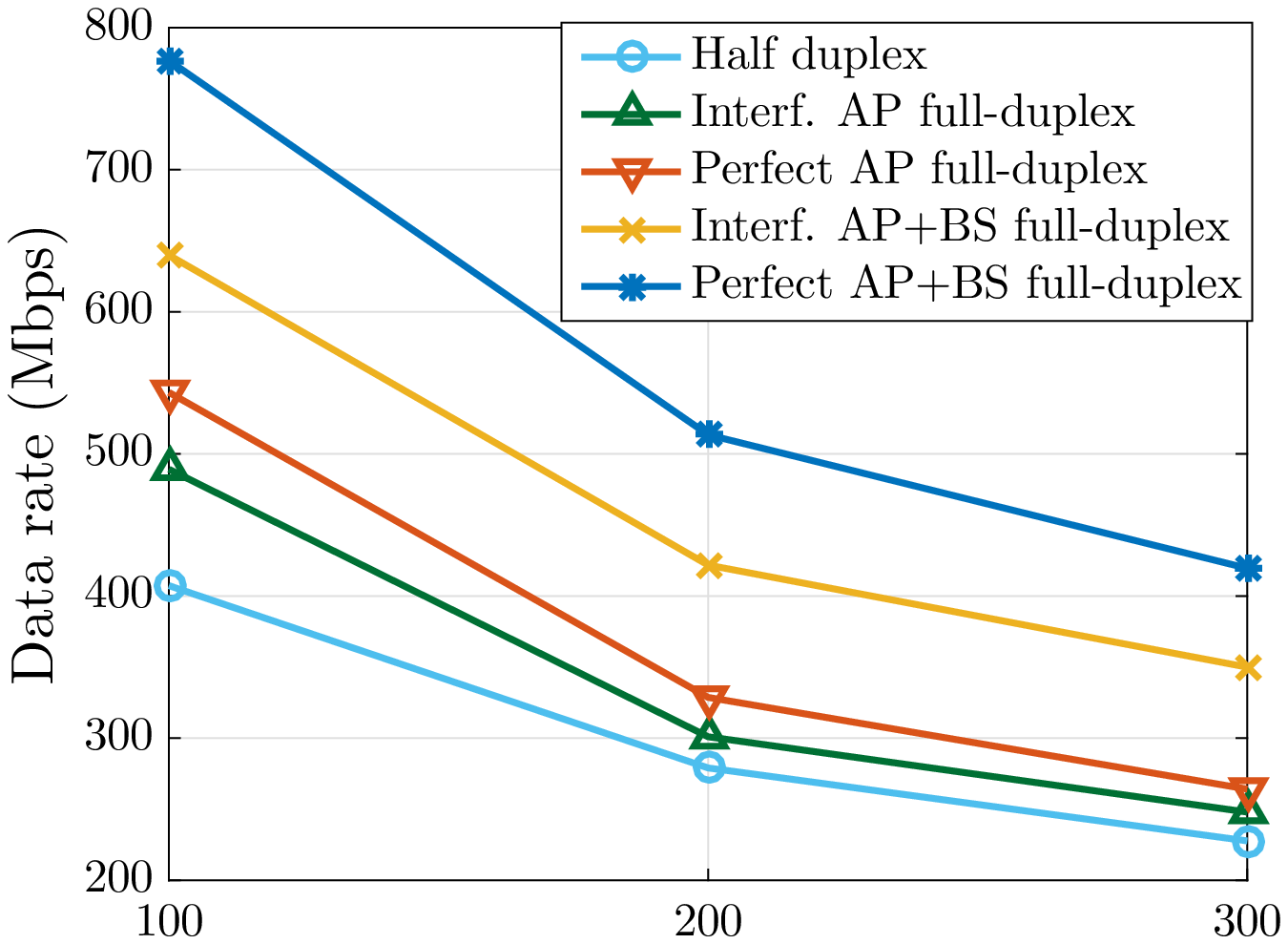}
		\caption{Number of users}
		\label{Ch3_Fig_FullD:b}
	\end{subfigure}
	\caption{Performance of data rate for different duplex modes are compared in different (a) data rate type and (b) number of users.}\label{Ch3_Fig_FullD}
\end{figure}

\subsection{Case Studies: P2P Communications}\label{Ch3_Sec5_SubSec6}
In this subsection, we focus on the capability of the proposed JSRA algorithm in P2P communication scenario. Specifically, we consider the data delivery in a platoon of vehicles, which is currently intensively studied by standardization association and research activities~\cite{3GPPV2X}. It is obvious that for vehicles in a platoon, P2P communications are more likely to be established, as vehicles move in a line and data is forwarded by the vehicles one after another. When the number of vehicles in a platoon increases, the delivery latency of data becomes a crucial target of performance optimization, for which we may concentrate on how to schedule the data streaming in a platoon. Moreover, as bumper antennas installed in vehicles are geographically separated and non-collocated, referring to the method of enabling simultaneous transmission and reception as full-duplex is no longer appropriate. Hence, the method in vehicular society has been proposed as \textit{bidirectional transmission} (BT), which is designed to be differentiated from conventional full-duplex. 

Based on the above observations, in this subsection we study how the system performance of data delivery in a platoon benefits from the proposed CG-MIS scheduling algorithm. We focus on a platoon of ten vehicles in the considered HetNet, where each vehicle has data to transmit to other vehicles in the platoon or to the BS, or vice versa. The vehicle speed is assumed to be $\SI{50}{\kilo\meter/\hour}$, and the distance between two adjacent vehicles equals to vehicle speed (in $\si{\meter/\second}$)$\, \times 2$. 
\begin{figure}[tbp]
	\centering
	\begin{subfigure}{0.48\textwidth}
		\includegraphics[width=\textwidth]{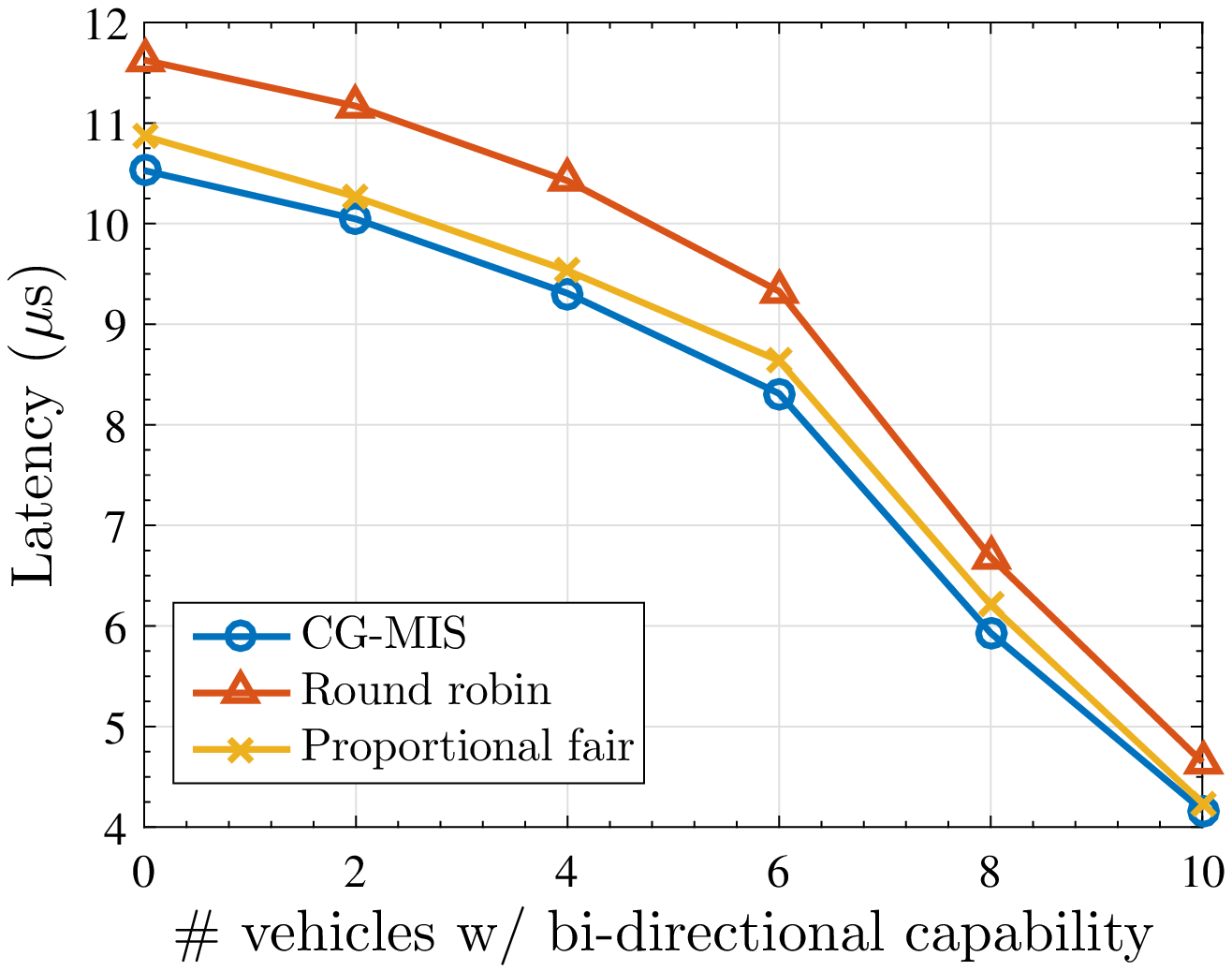}
		\caption{Data rate type}
		\label{Ch3_Fig_V:a}
	\end{subfigure}
	\begin{subfigure}{0.48\textwidth}
		\includegraphics[width=\textwidth]{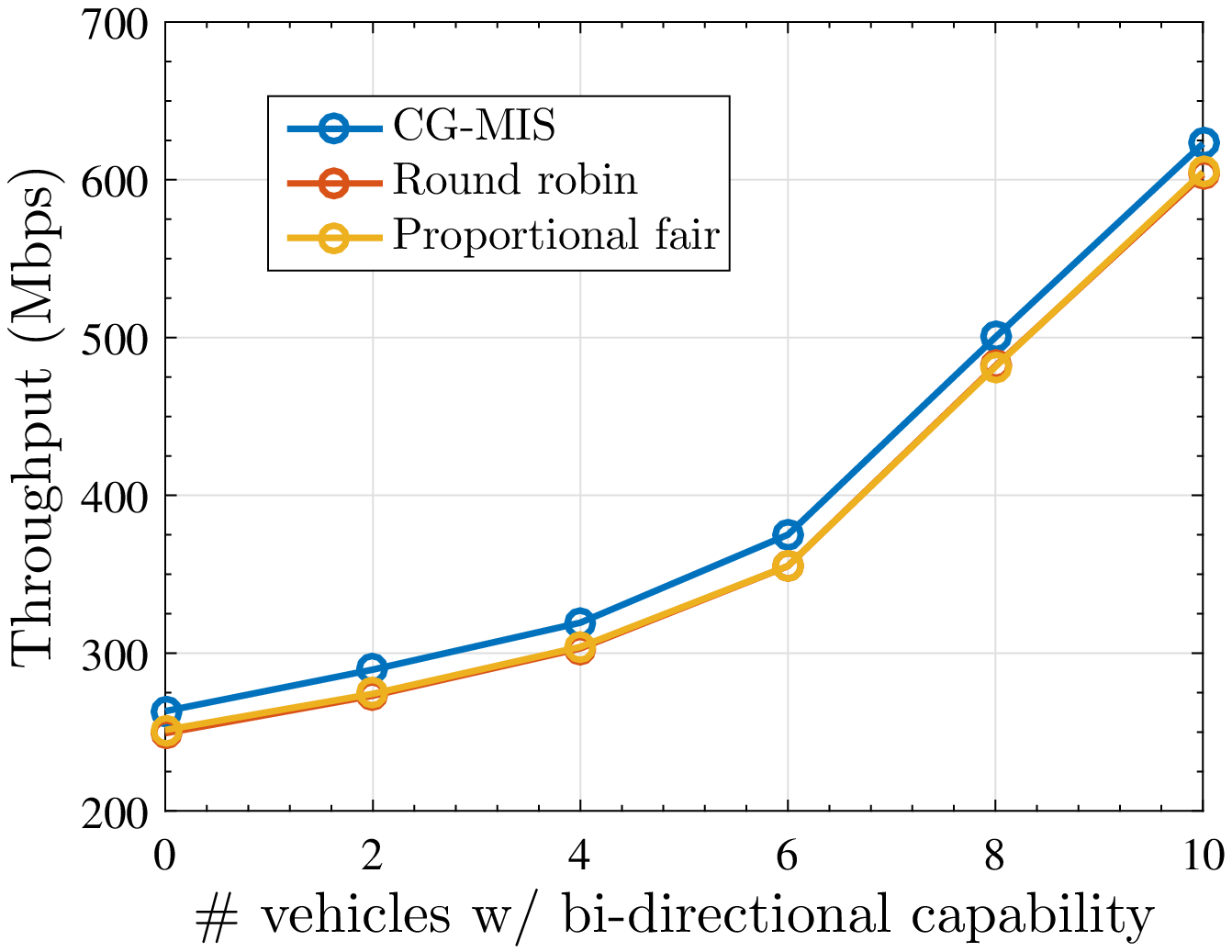}
		\caption{Number of users}
		\label{Ch3_Fig_V:b}
	\end{subfigure}
	\caption{Performance for different scheduling schemes with different number of BT-enabled vehicles are compared in (a) latency and (b) throughput.}\label{Ch3_Fig_V}
\end{figure}

In Fig.~\ref{Ch3_Fig_V:a} and Fig.~\ref{Ch3_Fig_V:b}, we plot the simulation results of the latency and the throughput for different scheduling schemes in the platoon, versus the number of BT-enabled vehicles, respectively. On the one hand, the results suggest that the more BT-enabled vehicles, the lower the latency and the higher the throughput are. Here, for BT-enabled vehicles, transmission and reception links are simultaneous scheduled, which allows data to ``flow'' through these vehicles and eventually decreases the latency. On the other hand, the proposed CG-MIS scheduling algorithm outperforms other classical scheduling algorithms, namely round robin and proportional fair algorithm, in both latency and throughput. The reason behind this lies in the fact that the CG-MIS scheduling algorithm groups links that are suitable to transmit simultaneously according to achievable channel capacity, which is better than the round robin algorithm that always transmits the first data flow in a queue, and the proportional fair algorithm that delays links with low achievable channel capacity due to interference, respectively.

\section{Summary}\label{Ch3_Sec6}
In this paper, we have addressed the problem of maximizing the achievable data rate of mm-wave multihop HetNets considering both downlink and uplink transmissions on backhaul and access links with IAB structure. To solve the maximization problem in an efficient way, we proposed a joint scheduling and resource allocation algorithm, where the optimization problem was decomposed into subproblems including link scheduling, transmission slot allocation, and transmission power allocation. The subproblems were then solved by the proposed maximum independent set based scheduling algorithm, the proportional fair slot allocation algorithm, and the water-filling power allocation algorithm, respectively. Based on this, a dynamic routing algorithm, which incorporates a path selection algorithm to greedily search the optimal path connecting BS and UE by considering real-time network load and traffic, was investigated to further improve the data rate. By evaluating the proposed algorithms via extensive simulations, we concluded that the proposed joint scheduling and resource allocation algorithm, together with the dynamic routing algorithm, outperforms benchmark schemes in terms of achievable data rate, and closely approach the theoretical optimum, yet with lower latency. Besides, the proposed algorithms achieve higher data rate, by enabling the flexible adjustment of downlink and uplink slot allocation according to spontaneous user demand, and support both half- and full-duplex modes with considerable performance enhancement. In particular, the proposed algorithms can deliver significant flexibility in terms of fulfilling various performance requirements for both point-to-point and point-to-multipoint communications.

Future work can consider a joint optimization of link scheduling, resource allocation scheme, and also routing strategy to achieve even better overall system performance compared to the current solutions, while a potential increase in computational complexity should be also investigated. It would also be interesting to leverage the proposed solution to multi-connectivity scenarios where users are allowed to connect with multiple BSs, and/or to other interference models.

\appendices

\section{Proof of Theorem~\ref{Ch3_Thm3}}\label{Ch3_Sec7_SubSec4}
As the transmission power of a link is non-negative, the relaxed optimization problem~\ref{Ch3_prb3} provided by~\eqref{eqnoptPrlx} needs to be modified as
\begin{align*}
\underset{\bm{p}}\max &\null \quad \sum_{i \in \bm{M}_{s,k}} \log \left(1+\gamma_{i} p_{i}\right), \\
\textnormal{s.t.} &\null \quad \sum_{i \in \mathcal{M}_{s,k}} p_{i} \leq P_\text{max}, \; p_{i} \geq 0, \; \forall i \in \mathcal{M}_{s,k},  \numberthis \label{eqnoptPrlxpf1}
\end{align*}
where $\gamma_{i} = \frac{g_i l_i^{-1}}{\eta}$. For simplicity, we assume $\log = \ln$. Observe that all inequality constraints functions in~\eqref{eqnoptPrlxpf1} are affine, and at least a set of $\{p_{i}\}$ exists such that $p_{i} = \frac{P_\textnormal{max}} {(|\mathcal{M}_{s,k}|+1)} > 0$, and $\sum_{i \in \mathcal{M}_{s,k}} p_{i} \leq P_\text{max}$, which implies the Karush-Kuhn-Tucker (KKT) conditions are necessary. Moreover, the objective function in~\eqref{eqnoptPrlxpf1} is the sum of convex functions and consequently convex as well. Therefore, the KKT conditions are also sufficient, in which we can use standard KKT form to solve the problem. They are concluded as follows:
\begin{itemize}
	\item Primal Feasibility (PF):
	\begin{align}
	\sum_{i \in \mathcal{M}_{s,k}} p_{i} \leq P_\text{max}, \;p_{i} \geq 0, \; \forall i \in \mathcal{M}_{s,k}. \label{eqnoptPrlxpf2}
	\end{align}
\end{itemize}
\begin{itemize}
	\item Dual Feasibility (DF):
	\begin{align}
	&-\frac{\gamma_{i}} {1+\gamma_{i}p_{i}} + \phi - \omega_{i} = 0, \, \forall i \in \mathcal{M}_{s,k}, \; \phi \geq 0, \; \omega_{i} \geq 0,\; \forall i \in \mathcal{M}_{s,k}. \label{eqnoptPrlxpf3}
	\end{align}
\end{itemize}
\begin{itemize}
	\item Complementary Slackness (CS):
	\begin{align}
	{\phi} {\left(\sum_{i \in \mathcal{M}_{s,k}} {p_{i} - P_\textnormal{max}}\right)} = 0, \; \omega_{i}p_{i} = 0,\; \forall i \in \mathcal{M}_{s,k}. \label{eqnoptPrlxpf4} 
	\end{align}
\end{itemize}

As the channel capacity $r_{i}$ is increasing with $p_{i}$, the optimal power allocation for link $i$, denoted as $p_{i}^{*}$, should satisfy $\sum_{i \in \mathcal{M}_{s,k}} p_{i}^{*} = P_\textnormal{max}$. As a result, we can always increase some $p_{i}$ without violating the first PF condition to increase the sum rate, in case $\sum_{i \in \mathcal{M}_{s,k}} p_{i} < P_\textnormal{max}$, which means the first CS condition is always satisfied and there is no further restriction on the Lagrangian multiplier $\phi$ besides the second DF condition. Actually,~\eqref{eqnoptPrlxpf3} also indicates that $\phi > 0$. Otherwise, assume $\phi = 0$, then the other two DF conditions require $0 > -\frac{\gamma_{i}} {1+\gamma_{i}p_{i}} - \omega_{i} = 0$, and $\omega_{i} \geq 0$, which results in contradiction.

Now, let link $j$ gets positive transmission power, i.e., $p_{j} > 0$. Then, according to~\eqref{eqnoptPrlxpf3} and~\eqref{eqnoptPrlxpf4}, we have $\omega_{j} = 0$ and $\frac{\gamma_{j}} {1+\gamma_{j}p_{j}} = \phi$, which means $p_{j} = \frac{1}{\phi} - \frac{1}{\gamma_{j}}$. Note that this can hold only if $\frac{1}{\phi} - \frac{1}{\gamma_{j}} > 0$, or equivalently, $\gamma_{j} > \phi$. Otherwise, to satisfy the first DF condition, we must have $p_{j} = 0$, and $\omega_{j} = \phi - \gamma_{j}$.

In conclusion, the optimal solution of problem~\ref{Ch3_prb3} is given by
\begin{equation}
p_{i}^{*} = \max\left\{ \frac{1}{\phi^{*}} - \frac{1}{\gamma_{i}}, 0\right\},
\label{eqnoptPrlxpf22}
\end{equation}
where the optimal Lagrangian multiplier $\phi^{*}$ can be derived from 
\begin{equation}
\sum\limits_{i \in \mathcal{M}_{s,k}}{\max\left\{ \frac{1}{\phi^{*}} - \frac{1}{\gamma_{i}}, 0\right\}} = P_\textnormal{max}.
\label{eqnoptPrlxpf23}
\end{equation}

\bibliographystyle{IEEEtran}
\bibliography{mm_wave}

\begin{thebibliography}{10}
\providecommand{\url}[1]{#1}
\csname url@samestyle\endcsname
\providecommand{\newblock}{\relax}
\providecommand{\bibinfo}[2]{#2}
\providecommand{\BIBentrySTDinterwordspacing}{\spaceskip=0pt\relax}
\providecommand{\BIBentryALTinterwordstretchfactor}{4}
\providecommand{\BIBentryALTinterwordspacing}{\spaceskip=\fontdimen2\font plus
\BIBentryALTinterwordstretchfactor\fontdimen3\font minus
  \fontdimen4\font\relax}
\providecommand{\BIBforeignlanguage}[2]{{%
\expandafter\ifx\csname l@#1\endcsname\relax
\typeout{** WARNING: IEEEtran.bst: No hyphenation pattern has been}%
\typeout{** loaded for the language `#1'. Using the pattern for}%
\typeout{** the default language instead.}%
\else
\language=\csname l@#1\endcsname
\fi
#2}}
\providecommand{\BIBdecl}{\relax}
\BIBdecl

\bibitem{Rappaport}
T.~Rappaport, S.~Sun, R.~Mayzus, H.~Zhao, Y.~Azar, K.~Wang, G.~Wong, J.~Schulz,
  M.~Samimi, and F.~Gutierrez, ``Millimeter wave mobile communications for {5G}
  cellular: It will work!'' \emph{IEEE Access}, vol.~1, pp. 335--349, May 2013.

\bibitem{3GPPNROD}
{3GPP}, ``{NR; Overall description; Stage-2},'' TS 38.300, Oct. 2018.

\bibitem{YLiCommag}
Y.~Li, E.~Pateromichelakis, N.~Vucic, J.~Luo, W.~Xu, and G.~Caire, ``Radio
  resource management considerations for {5G} millimeter wave backhaul and
  access networks,'' \emph{IEEE Commun. Mag.}, vol.~55, no.~6, pp. 86--92, June
  2017.

\bibitem{Bhushan}
N.~Bhushan, J.~Li, D.~Malladi, R.~Gilmore, D.~Brenner, A.~Damnjanovic, R.~T.
  Sukhavasi, C.~Patel, and S.~Geirhofer, ``Network densification: The dominant
  theme for wireless evolution into {5G},'' \emph{IEEE Commun. Mag.}, vol.~52,
  no.~2, pp. 82--89, Feb. 2014.

\bibitem{Ge}
X.~Ge, H.~Cheng, M.~Guizani, and T.~Han, ``{5G} wireless backhaul networks:
  Challenges and research advances,'' \emph{IEEE Network}, vol.~28, no.~6, pp.
  6--11, Nov. 2014.

\bibitem{RappaportTC}
T.~S. Rappaport, G.~R. MacCartney, M.~K. Samimi, and S.~Sun, ``Wideband
  millimeter-wave propagation measurements and channel models for future
  wireless communication system design,'' \emph{IEEE Trans. Commun.}, vol.~63,
  no.~9, pp. 3029--3056, Sept. 2015.

\bibitem{Yuan}
D.~Yuan, H.~Lin, J.~Widmer, and M.~Hollick, ``Optimal joint routing and
  scheduling in millimeter-wave cellular networks,'' in \emph{2018 IEEE
  INFOCOM}, Apr. 2018, pp. 1205--1213.

\bibitem{YLiWCNC}
Y.~Li, J.~Luo, W.~Xu, N.~Vucic, E.~Pateromichelakis, and G.~Caire, ``A joint
  scheduling and resource allocation scheme for millimeter wave heterogeneous
  networks,'' in \emph{2017 IEEE WCNC}, Mar. 2017, pp. 1--6.

\bibitem{BLi15}
B.~Li, D.~Zhu, and P.~Liang, ``Small cell in-band wireless backhaul in massive
  {MIMO} systems: A cooperation of next-generation techniques,'' \emph{IEEE
  Trans. Wireless Commun.}, vol.~14, no.~12, pp. 7057--7069, Dec. 2015.

\bibitem{Tabassum}
H.~Tabassum, A.~H. Sakr, and E.~Hossain, ``Analysis of massive {MIMO}-enabled
  downlink wireless backhauling for full-duplex small cells,'' \emph{IEEE
  Trans. Commun.}, vol.~64, no.~6, pp. 2354--2369, June 2016.

\bibitem{García-Rois}
J.~García-Rois, F.~Gómez-Cuba, M.~R. Akdeniz, F.~J. González-Castaño, J.~C.
  Burguillo, S.~Rangan, and B.~Lorenzo, ``On the analysis of scheduling in
  dynamic duplex multihop {mmWave} cellular systems,'' \emph{IEEE Trans.
  Wireless Commun.}, vol.~14, no.~11, pp. 6028--6042, Nov. 2015.

\bibitem{Sharma}
A.~Sharma, R.~K. Ganti, and J.~K. Milleth, ``Joint backhaul-access analysis of
  full duplex self-backhauling heterogeneous networks,'' \emph{IEEE Trans.
  Wireless Commun.}, vol.~16, no.~3, pp. 1727--1740, Mar. 2017.

\bibitem{Qiao1}
J.~Qiao, L.~Cai, X.~Shen, and J.~W. Mark, ``Enabling multi-hop concurrent
  transmissions in 60 {GHz} wireless personal area networks,'' \emph{IEEE
  Trans. Wireless Commun.}, vol.~10, no.~11, pp. 3824--3833, Nov. 2011.

\bibitem{HWang}
H.~Wang, K.~Chin, S.~Soh, and R.~Raad, ``A distributed maximal link scheduler
  for multi {Tx/Rx} wireless mesh networks,'' \emph{IEEE Trans. Wireless
  Commun.}, vol.~14, no.~1, pp. 520--531, Jan. 2015.

\bibitem{YLiuTWC}
Y.~Liu, X.~Chen, Y.~Niu, B.~Ai, Y.~Li, and D.~Jin, ``Mobility-aware
  transmission scheduling scheme for millimeter-wave cells,'' \emph{IEEE Trans.
  Wireless Commun.}, vol.~17, no.~9, pp. 5991--6004, Sept. 2018.

\bibitem{Niu2015}
Y.~Niu, C.~Gao, Y.~Li, L.~Su, D.~Jin, and A.~V. Vasilakos, ``Exploiting
  device-to-device communications in joint scheduling of access and backhaul
  for {mmWave} small cells,'' \emph{IEEE J. Sel. Areas Commun.}, vol.~33,
  no.~10, pp. 2052--2069, Oct. 2015.

\bibitem{Niu2}
Y.~Niu, C.~Gao, Y.~Li, L.~Su, D.~Jin, Y.~Zhu, and D.~O. Wu, ``Energy-efficient
  scheduling for {mmWave} backhauling of small cells in heterogeneous cellular
  networks,'' \emph{IEEE Trans. Veh. Technol.}, vol.~66, no.~3, pp. 2674--2687,
  Mar. 2017.

\bibitem{NWang}
N.~Wang, E.~Hossain, and V.~K. Bhargava, ``Joint downlink cell association and
  bandwidth allocation for wireless backhauling in two-tier {HetNets} with
  large-scale antenna arrays,'' \emph{IEEE Trans. Wireless Commun.}, vol.~15,
  no.~5, pp. 3251--3268, May 2016.

\bibitem{Hao}
W.~Hao and S.~Yang, ``Small cell cluster-based resource allocation for wireless
  backhaul in two-tier heterogeneous networks with massive {MIMO},'' \emph{IEEE
  Trans. Veh. Technol.}, vol.~67, no.~1, pp. 509--523, Jan. 2018.

\bibitem{YLiu}
Y.~Liu, X.~Fang, and M.~Xiao, ``Discrete power control and transmission
  duration allocation for self-backhauling dense {mmWave} cellular networks,''
  \emph{IEEE Trans. Commun.}, vol.~66, no.~1, pp. 432--447, Jan. 2018.

\bibitem{Goyal}
S.~Goyal, P.~Liu, and S.~Panwar, ``Scheduling and power allocation in
  self-backhauled full duplex small cells,'' in \emph{2017 IEEE ICC}, May 2017,
  pp. 1--7.

\bibitem{Hajek}
B.~Hajek and G.~Sasaki, ``Link scheduling in polynomial time,'' \emph{IEEE
  Trans. Inf. Theory}, vol.~34, no.~5, pp. 910--917, Sept. 1988.

\bibitem{AZhou}
A.~Zhou, M.~Liu, Z.~Li, and E.~Dutkiewicz, ``Cross-layer design for
  proportional delay differentiation and network utility maximization in
  multi-hop wireless networks,'' \emph{IEEE Trans. Wireless Commun.}, vol.~11,
  no.~4, pp. 1446--1455, Apr. 2012.

\bibitem{YLiTWC}
Y.~Li, J.~Luo, M.~H.~C. Garcia, R.~Boehnke, R.~A. Stirling-Gallacher, W.~Xu,
  and G.~Caire, ``On the beamformed broadcasting for millimeter wave cell
  discovery: Performance analysis and design insight,'' \emph{IEEE Transactions
  on Wireless Communications}, vol.~17, no.~11, pp. 7620--7634, Nov. 2018.

\bibitem{JLee}
J.~Lee and S.~Leyffer, \emph{Mixed integer nonlinear programming}.\hskip 1em
  plus 0.5em minus 0.4em\relax Springer Science \& Business Media, 2011, vol.
  154.

\bibitem{Karp}
R.~M. Karp, ``Reducibility among combinatorial problems,'' in \emph{Complexity
  of computer computations}.\hskip 1em plus 0.5em minus 0.4em\relax Springer,
  1972, pp. 85--103.

\bibitem{Halldorsson}
M.~M. Halld{\'o}rsson and J.~Radhakrishnan, ``Greed is good: Approximating
  independent sets in sparse and bounded-degree graphs,'' \emph{Algorithmica},
  vol.~18, no.~1, pp. 145--163, 1997.

\bibitem{Lopez-Perez}
D.~Lopez-Perez, I.~Guvenc, G.~de~la Roche, M.~Kountouris, T.~Q.~S. Quek, and
  J.~Zhang, ``Enhanced intercell interference coordination challenges in
  heterogeneous networks,'' \emph{IEEE Wireless Commun.}, vol.~18, no.~3, pp.
  22--30, June 2011.

\bibitem{Neely}
M.~J. Neely, ``Stochastic network optimization with application to
  communication and queueing systems,'' \emph{Synthesis Lectures on
  Communication Networks}, vol.~3, no.~1, pp. 1--211, 2010.

\bibitem{3GPPV2X}
{3GPP}, ``{Study on LTE-based V2X services},'' TS 36.885, July 2016.

\end{thebibliography}

\end{document}